\documentclass[11pt]{amsart}

\usepackage[utf8]{inputenc}
\usepackage[T1]{fontenc}   
\usepackage[english]{babel} 
\usepackage{fullpage}
\usepackage{subfig}

\usepackage{amssymb,libertine}
  
\usepackage{array}
\usepackage{bbm}

\usepackage{mathtools}
\usepackage[svgnames]{xcolor}
\usepackage{hyperref}

\usepackage{wrapfig}
\usepackage{tikz}
\usepackage{color}
\usepackage{graphicx}
\usepackage{amssymb}
\usepackage{epstopdf}
 \usepackage{latexsym}
\usepackage{amsmath}
\usepackage{amsfonts} 
\usepackage{amsthm}
\usepackage[export]{adjustbox}

   \def\CC{\mathbb{C}}
    \def\DD{\mathbb{D}}
    \def\NN{\mathbb{N}}
    
    \def\RR{\mathbb{R}}
    \def\ZZ{\mathbb{Z}}
  
    \newtheorem{Proposition}{Proposition}
\newtheorem{Theorem}[Proposition]{Theorem}
\newtheorem{Lemma}[Proposition]{Lemma}
\newtheorem{Definition}[Proposition]{Definition}
\newtheorem{Corollary}[Proposition]{Corollary}
\newtheorem{Remark}[Proposition]{Remark}
\newtheorem{Note}[Proposition]{Note}
\newtheorem{Notations}[Proposition]{Notations}
\def\cFrac#1#2{%
\begin{array}{@{}c@{}}\multicolumn{1}{c|}{#1}\\%
\hline\multicolumn{1}{|c}{#2}\end{array}}

\def\z{\noindent}  
\def\be{\begin{equation}}
\def\ee{\end{equation}}
    \def\la{\langle}
\def\ge{\geqslant}
\def\le{\leqslant}
\def\ra{\rangle}

\def\bd{\begin{Definition}}
\def\ed{\end{Definition}}
\def\bt{\begin{Theorem}}
\def\et{\end{Theorem}}

\def\epsilon{\varepsilon}
\def\bel{\begin{equation}\label}
\def\ee{\end{equation}}

\def\Ei\text{Ei}

\def\phi{\varphi}
\def\bfu{\mathbf{u}}

 \def\la{\langle}
\def\ra{\rangle}
\def\q{q}

\def\a{\alpha}
\def\gg{  \mathbf{g}}

\def\ff{  \mathbf{f}}
\def\yy{  \mathbf{y}}
\def\ww{  \mathbf{w}}

\def\HHL{\mathbb{H_\ell}}

\def\om{\omega}

\def\Le{\leqslant}
\def\Ge{\geqslant}

\def\f{\phi}
\def\F{\Phi}
\def\et{\eta}

\title{Nonperturbative time dependent solution of a simple ionization model.}
\author{Ovidiu Costin}\address{The Ohio State University, 231 W 18th Ave, Columbus, OH 43210}\email{costin.9@osu.edu}
\author{Rodica D. Costin}\address{The Ohio State University, 231 W 18th Ave, Columbus, OH 43210} \email{costin.10@osu.edu}
\author{Joel L. Lebowitz}\address{Departments of Mathematics and Physics, Rutgers University, Hill Center - Busch Campus, 
110 Frelinghuysen Road
Piscataway, NJ 08854}\email{ lebowitz@math.rutgers.edu}

\begin{document}
\maketitle


\begin{abstract}
 We present a non-perturbative solution of the Schr\"odinger equation $i\psi_t(t,x)=-\psi_{xx}(t,x)-2(1 +\a \sin\omega t) \delta(x)\psi(t,x)$, written  in units in which $\hbar=2m=1$,  describing the ionization of a model atom by a parametric oscillating potential. This model has been studied extensively by many authors, including us. It has surprisingly many features in common with those observed in the ionization of real atoms and emission by solids, subjected to microwave or laser radiation. Here we use new mathematical methods to go beyond previous investigations and to provide a complete and rigorous analysis  of this system. We obtain the Borel-resummed transseries (multi-instanton expansion)  valid for all values of $\a,\omega,t$ for the wave function, ionization probability, and   energy distribution of the emitted electrons, the latter not studied previously for this model.  We show that for large $t$ and small $\a$ the energy distribution has sharp peaks at energies which are multiples of $\omega$, corresponding to photon capture. We obtain small $\a$ expansions that converge for all $t$, unlike those of standard perturbation theory.
 We expect that our analysis will serve as a basis for treating more realistic systems revealing a form of  universality in different emission processes. 
 \end{abstract}

\section{Introduction}

The ionization of atoms and the emission of electrons from a metal, induced by an oscillating field, such as one produced by a laser, continues to be a problem of great theoretical and practical interest, see \cite{BauerD}, \cite{CT2}, \cite{28}, \cite{Zhang} and the references therein. This phenomena goes under the name of photo-emission.
It was first explained by Einstein in 1905; an electron absorbs "$n$ photons" acquiring  their  energy, $n\hbar \omega$, which permits it to escape the potential barrier confining it. While the complete physics of these phenomena would involve quantization of the electromagnetic field and its interaction with matter, i.e. photons and relativity, the basic understanding is contained already in the semiclassical limit where the electromagnetic field is not quantized, expected to be valid when the density of photons is large \cite{Intro}; for a mathematical derivation of this limit via Floquet states see \cite{Floquet}. One then considers the solution of the non-relativistic Schr\"odinger equation in an oscillating field giving rise to a potential with period $2\pi/\omega$, \cite{BauerD}, \cite{CT2}. Resonant energy absorption at multiples of $\omega$ then yields effects qualitatively similar to those of photons, in some regimes, see Fig.\,\ref{fig12}.

In units in which $\hbar=2m=1$ the Schr\"odinger equation has the form
\begin{equation}\label{Scho}
i\frac{\partial \psi}{\partial t}=\left[ H_0+V(t,x)\right]\psi
\end{equation}
Here $H_0$ describes the time-independent system assumed to have  both discrete and continuous spectrum, and  the laser field is modeled by a time periodic potential,  $V(t,x)=V(t+2\pi/\om,x)$. Typically, the latter is represented as a vector potential or a dipole field, e.g. $V(t,x)=E\cdot x\, \sin\om t$, \cite{BauerD}, \cite{CT2}.

Starting in a bound state of the reference hamiltonian $H_0$, $\psi(x,0)=u_b(x)$ corresponding to the  energy $-E_b$ and expanding in generalized eigenstates, assuming $u_b$ is the only effective bound state, the evolution is given by
\begin{equation}
  \label{eq:schr}
\psi(x,t)=\theta(t)e^{iE_bt}u_b(x)+\int_{\RR^d} \Theta(k,t)u(k,x)e^{-ik^2t}\, dk
\end{equation}
Physically, $|\theta|^2$ gives the probability  of finding the particle in the eigenstate $u_b(x)$ and  $|\Theta(k,t)|^2$ is the  probability density of the ionized electron in  ``quasi-free'' states (continuous spectrum) with energies $k^2$.
It follows from the unitarity of the evolution that
$$ |\theta(t)|^2+\int_{\RR^d} |\Theta(k,t)|^2\, dk=1$$
 Accordingly, if $\theta(t)\to 0$ as $t\to\infty$, we say that the system ionizes completely.

 When $\om>E_b$,  a first order approximation \cite{BauerD}, \cite{CT2} in the strength of $V$ (used very judiciously) gives  emission into states with energy $k^2+E_b=\om$. 
Clever physics arguments also yield Fermi's golden rule of exponential decay from the initial bound state \cite{BauerD}, \cite{CT2}. These only hold approximately and only over some ``intermediate'' time scales as discussed in the sequel.

To deal with the case of transitions caused by large fields $V$ one needs to go to high order perturbation theory, which is complicated \cite{BauerD}, \cite{CT2}. In fact, as we will explain, standard perturbation theory only produces a finite number of correct perturbative orders. To deal with larger fields  one uses various "strong field" approximations due to Keldysh and others \cite{Keldysh}. For literature on strong field approximations see \cite{tutorial}, \cite{22}, \cite{37}, \cite{Popruzhenko}. There, one uses scattering states $\tilde{u}(k,x)$ strongly modified (Volkov states) by the oscillating field.
We shall not consider that here but focus on getting a complete rigorous solution of \eqref{Scho} for a toy model which nevertheless exhibits many features of more realistic situations, see \cite{OCJLAR}. We can then study carefully how photons show up in this semiclassical limit.

The model we study is a one dimensional system with reference Hamiltonian $H_0$, whose mathematical properties are analyzed in \cite{cycon},  is
\begin{equation}
  \label{eq:h0}
H_0=-\frac{\partial^2}{\partial x^2}-2\delta(x) ,\ \ x\in\RR,
\end{equation}
It  has a single bound state 
$$u_b(x)=e^{-|x|}$$ with energy $-E_b=-1$ and its generalized eigenfunctions are
\begin{equation}\label{four}
u(k,x)=\frac{1}{\sqrt{2\pi}} \left( e^{ikx}-\frac{ e^{i|kx|}}{1+ik}  \right),\ \ x,k\in\RR
\end{equation}
Beginning at $t=0$, when $\psi(x,0)=u_b(x)$, we add a parametric harmonic perturbation to the base potential. For $t\Ge 0$ we have
\begin{equation}
  \label{eq:field}
 H=-\frac{\partial^2}{\partial x^2}-2\delta(x) -2\a \sin\omega t\, \delta(x)=H_0+V(t,x)
\end{equation}
 (where we take for definiteness $\a,\omega>0$)
and look for solutions of the associated Schr\"odinger equation in the form \eqref{eq:schr}.
The full behavior of $\psi(x,t)$ is very complicated despite the simplicity of the model. We expect the main feature of the evolution of $\psi(x,t)$ to be universal for ionization by an oscillatory field. 

As already noted this model has been studied extensively before. We refer the reader in particular to \cite{OCJLAR} where it was shown that, for all $\alpha$ and $\omega$,  $\theta(t)\to 0$, i.e., we have complete ionization. We also investigated there both analytically and numerically the behavior of $\theta(t)$ as a function of $\omega$ and showed qualitative agreement with experiments on the ionization of hydrogen-like atoms by strong radio frequency fields. In \cite{CMP} we studied general periodic potentials and found the condition on the Fourier coefficients for complete ionization. There are (exceptional) situations where one does not get complete ionization. In \cite{CLS} we showed ionization when the external forcing is an oscillating electric field. A large field approximation for this latter setting can be found in \cite{Elberfeld}.

In this paper we introduce new methods which allow us to complete the analysis of this model for all $t,\alpha$: we obtain a rapidly convergent representation (in the form of a Borel summed transseries, or ``multi-instanton expansion'')  for the solution $\psi(x,t)$ valid for all $t,\,\omega$ and $\a$ and  we find the distribution $|\Theta(k,t)|^2$ of energies of the emitted electrons as a function of $t,\a,\omega$. The latter, which was not done before, is where the  "photonic" picture shows up most clearly. We will investigate this connection more explicitly in a separate article \cite{prepa}. 

There are strong peaks  of $|\Theta(k,t)|^2$ which for small $\alpha$ and $\omega\in(\frac 1n, \frac 1{n-1})$ are centered  near
$k^2= n\omega-1$, see Fig. 1 for $\omega=3/2$. The main peak corresponds to the absorption of one photon and approaches a Dirac distribution centered at $k^2=1/2$ in the limit $t\to\infty$ followed by $\a \to 0$. Clearly,  the discreteness of the emission spectrum in the above limit is a consequence of the
periodicity of the classical oscillating field and does not require the 
concept of photons, see also  \cite{Popruzhenko}, Footnote 1.
We find that   there are other (smaller) peaks emanating from the  bottom of the continuous spectrum. For small $\alpha$ these are centered near $k^2=n\omega$, see Theorem \ref{T4}, (iv). 
We also obtain a perturbation expansion of the wave function for small $\alpha$ in a form which is uniformly convergent for any $t\in\RR^+$, and which, in principle, can be carried out explicitly to any order. 

It follows from our analysis that the predictions of the usual perturbation theory hold when  $t=o(\a ^{-2}|\ln\a|)$, beyond which the behavior of the physical quantities is qualitatively different.

\subsection{The Laplace transform and the energy representation}\label{prior}  It was shown in \cite{OCJLAR} that

\begin{equation}
  \label{eq:theta1}
  \theta(t)=1+2i\int_0^t \f(s)ds
\end{equation}
and 
\begin{equation}
  \label{eq:theta1T}
 \Theta(k,t)=\sqrt{\frac{2}{\pi}}\frac{|k|}{1-i|k|}\,\int_0^t\f(s)\, \mathrm{e}^{i(1+k^2)s}\, ds
\end{equation}
where $\f$ satisfies the integral equation
$$\f(t)=\a\sin\omega t\left(1+\int_0^t \f(s)\et(t-s)ds\right)$$
with 
$$\et(s)=\frac{2i}{\pi}\,\int_0^\infty\,\frac{u^2e^{-is(1+u^2)}}{1+u^2}\, du=\frac{\sqrt{i} e^{-i s}}{\sqrt{\pi } \sqrt{s}}-i \text{erfc}\left(\sqrt{i s}\right) $$

It can be checked, \cite{OCJLAR},  that the Laplace transform of $\phi$
\begin{equation}\label{star}
\F(p):=\mathcal{L}\phi(p)=\int_0^{\infty}\f(s)e^{-ps}ds
\end{equation}
is analytic in the right half plane and satisfies the functional equation\footnote{A very similar functional equation can be obtained directly from the Schr\"odinger equation for $\mathcal{L}\psi(0,p)$.}
\begin{equation}
  \label{eq7}
  \F(p)=\frac{i\a }2\frac{\F(p-i\omega)}{i\sqrt{ip+\omega-1}+1} -\frac{i\a }2\frac{\F(p+i\omega)}{i\sqrt{ip-\omega-1}+1} +\frac{\a\omega}{\omega^2+p^2}
\end{equation}
(The square root is understood to be positive on $\RR^+$, and analytically continued on its Riemann surface. \footnote{In previous papers we used $r$ instead of $\a$ and a different branch of the square root; with these changes the formulas agree.})

\section{Main results}
\subsection{Results for general $\a,\omega$}

\begin{Theorem}\label{T1}
  For all $\a>0$ and $\omega>0$,

(i)  $(1+|p|)^2\F(p)$ is bounded and $\F$ is analytic in the closed right half plane, except for
  \begin{equation}\label{valbn}
p=-i \beta_n,\ \ \   \beta_n=1+n\omega, \ \ \ \ n\in\ZZ
  \end{equation}
   where it is
  analytic in $\sqrt{p+i \beta_n}  $; 

  (ii) in the open left half plane $\F$ has exactly one array of simple poles located at
  \begin{equation}
    \label{eq:42}
    p_n:=p_n(\a,\omega)=p_0(\a,\omega) -in\omega,\ \  \ \ n\in \ZZ
  \end{equation}
 and the residues $R_n=\text{res}(\Phi,p_n)$ can be calculated using continued fractions, see \S\ref{PropHomEq}, and satisfy
  \begin{equation}
    \label{eq:43}
    R_n=O(|n!|^{-1/2});\ \ \ |n|\to\infty
  \end{equation}
   Away from the line of poles, $(1+|p|)^2\F(p)$ is bounded in the left half plane. The functions $p_n$ and $R_n$ are analytic in $\a$ in a neighborhood of $[0,\infty)$;

  (iii) $\Re p_0(\a;\omega)<0$;  $p_0$ satisfies an equation of the form $A(p,\a;\omega)=B(p,\a;\omega)$ where $A,B$ are meromorphic functions (given by convergent continued fraction representations, see \eqref{eq:24},\,\eqref{eq:24O}).
  \end{Theorem}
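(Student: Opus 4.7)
The plan addresses the three items in order.

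For (i), I would begin by controlling $\f$ directly from its integral equation. Using the decay $\et(s) = O(s^{-1/2})$, a fixed-point argument gives $\f \in L^\infty(\RR^+)$, and complete ionization from \cite{OCJLAR} upgrades this to $\f \in L^1(\RR^+)$, so $\F(p)$ is analytic for $\Re p > 0$ and continuous up to the imaginary axis. The $(1+|p|)^{-2}$ decay would then follow from two integrations by parts in \eqref{star}, using $\f(0)=0$ and boundedness of $\f'$ (controlled by differentiating the integral equation). The singular structure on the imaginary axis I would read off directly from \eqref{eq7}: the radicals $\sqrt{ip+\om-1}$ and $\sqrt{ip-\om-1}$ vanish at $p=-i\beta_{-1}$ and $p=-i\beta_{1}$ respectively, producing square-root branch points there, and iterating \eqref{eq7} under the shifts $p \mapsto p \pm i\om$ propagates these to every $p = -i\beta_n$, $n \in \ZZ$; the local expansion of $\F$ in $\sqrt{p+i\beta_n}$ is inherited from the radicals themselves.

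For (ii), I would analyze the homogeneous version of \eqref{eq7}. A meromorphic solution with a simple pole at $p_0$ automatically develops simple poles at every $p_n = p_0 - in\om$ via the $\pm i\om$ shifts. Matching residues at $p_n$ in \eqref{eq7} produces the three-term recurrence
\begin{equation*}
R_n = \frac{i\a}{2}\frac{R_{n+1}}{i\sqrt{ip_n+\om-1}+1} - \frac{i\a}{2}\frac{R_{n-1}}{i\sqrt{ip_n-\om-1}+1}.
\end{equation*}
Since $ip_n = ip_0 + n\om$, the denominators grow like $\sqrt{|n|\om}$ as $|n|\to\infty$. Iterating forward ($n\to+\infty$) and backward ($n\to-\infty$) would yield convergent continued fractions $A(p_0,\a;\om)$ and $B(p_0,\a;\om)$ representing the two tails; the existence of a genuinely minimal (Pincherle) solution in both directions is the compatibility condition $A = B$, which implicitly determines $p_0$. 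The minimal solution satisfies $|R_{n+1}/R_n| \sim C/\sqrt{|n|\om}$, giving $R_n = O(|n!|^{-1/2})$. Uniqueness of the pole array then follows from uniqueness (up to scale) of the minimal solution, and analyticity of $p_n, R_n$ in $\a$ near $[0,\infty)$ from analyticity of the continued fractions together with the implicit function theorem applied to $A - B = 0$.

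For (iii), $\Re p_0 < 0$ follows by contradiction: a pole with $\Re p_0 \ge 0$ would contradict the boundedness and analyticity of $\F$ in the closed right half plane already established in (i). The relation $A = B$ is precisely the compatibility condition derived in (ii); both $A$ and $B$ are meromorphic in $p$ because they are uniformly convergent continued fractions whose entries are meromorphic in $p, \a, \om$. The hardest part, I expect, will be the simultaneous analysis of continued-fraction convergence and the extraction of the $|n!|^{-1/2}$ residue decay: uniform control on the recurrence coefficients near the branch points identified in (i) — where the radicals become small and the asymptotic estimates degenerate — is delicate and will drive most of the technical work.
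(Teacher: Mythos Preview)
Your outline captures the continued-fraction skeleton correctly, and for (iii) and for \eqref{eq:43} it matches the paper's route. But there is a genuine gap in (ii), and some looseness in (i).

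\textbf{The main gap: ``exactly one array'' for all $\a>0$.} You derive the compatibility equation $A(p_0,\a)=B(p_0,\a)$ from the two continued-fraction tails, and you say this ``implicitly determines $p_0$''. That is not enough: you must show that, for \emph{every} $\a>0$, this equation has \emph{exactly one} root in a strip $\Re p\in(-\omega,0]$ (mod $i\omega\ZZ$). Your argument that ``uniqueness of the pole array follows from uniqueness of the minimal solution'' conflates two different things: Pincherle-type uniqueness fixes the residue vector \emph{given} $p_0$, but says nothing about how many $p_0$ satisfy $A=B$. A priori, as $\a$ increases roots could appear, collide, or leave through the imaginary axis. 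The paper handles this by a genuinely different mechanism: it recasts the problem as $(I-K(\sigma,\a))^{-1}$ for a compact operator on $\ell^2(\ZZ)$, conjugates to a \emph{periodic} isospectral operator $K_2(\sigma,\a)=T^{-\sigma/\omega}K_0(\sigma,\a)T^{\sigma/\omega}$, approximates by finite rank, and then uses an argument-principle count (the integral of $\partial_s H/H$ over one period is integer-valued and real-analytic in $\a$, hence constant) to propagate the ``exactly one simple pole per strip'' from small $\a$ to all $\a>0$. Without some device of this kind---a topological count that is continuous in $\a$---your plan does not close.

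\textbf{Looseness in (i).} Complete ionization gives $\theta(t)\to0$, i.e.\ convergence of $\int_0^t\f$, which is weaker than $\f\in L^1$; and two integrations by parts for $(1+|p|)^{-2}$ decay would require control of $\f''$ that you have not established. The paper avoids this by working on the $\ell^2$ side: boundedness of $(1+|p|)^2\F$ comes from the decay of $(I-K)^{-1}B\ff$ in $\ell^2$, and the square-root branch structure at $\beta_n$ is read off from the analyticity of $K(\sigma,\a)$ in $\sqrt{\sigma-\beta_n}$ combined with the Fredholm alternative (Proposition~\ref{Prop2}), rather than from iterating \eqref{eq7} by hand.
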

  
  \begin{figure}[h!]\hspace{-2.cm}
    \subfloat[$t\le 150$]{\includegraphics[scale=0.5]{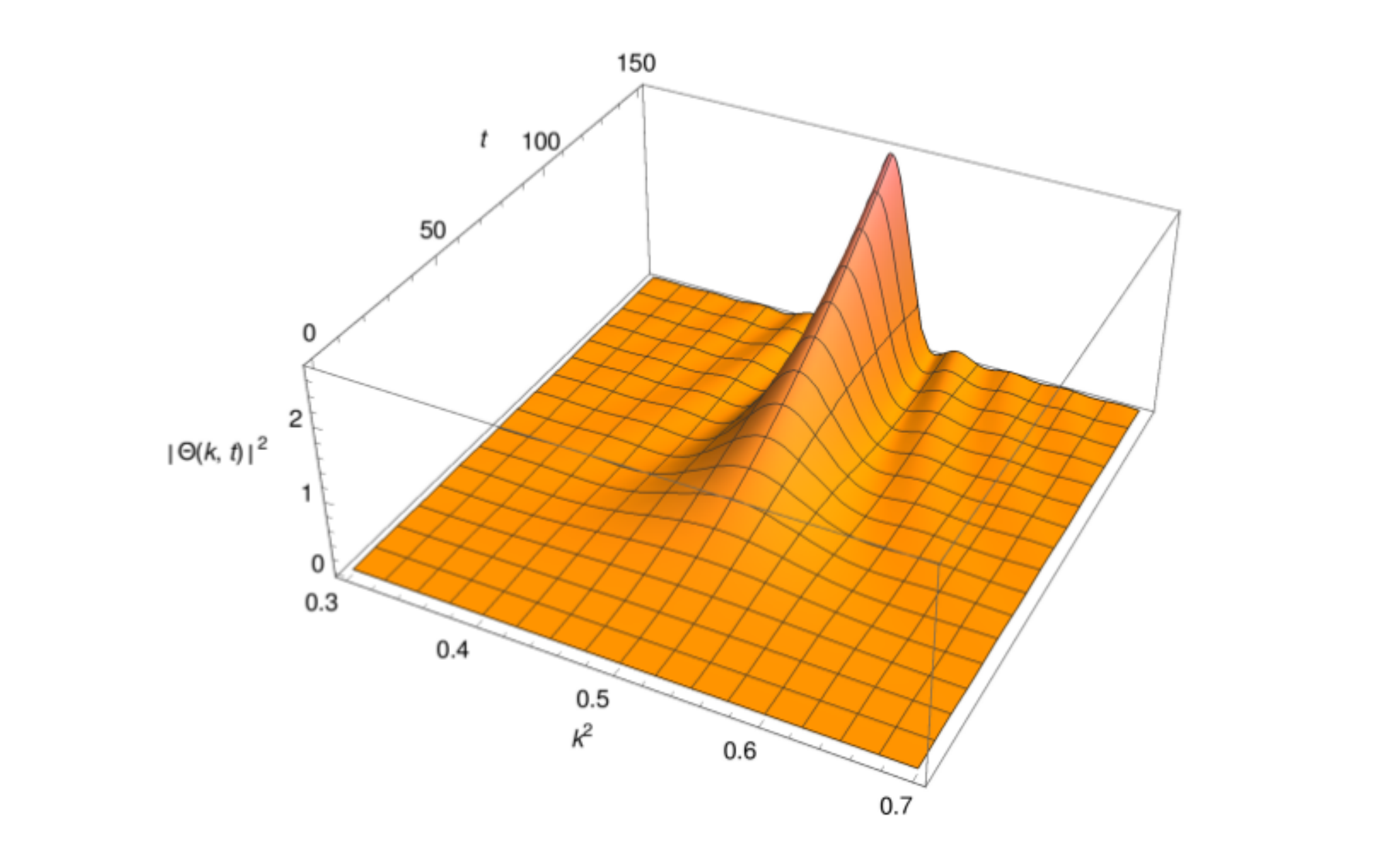}}\hspace{-1.5cm}
      \raisebox{0.3cm}{ \subfloat[$500\le t\le 1500$]{\includegraphics[scale=0.4]{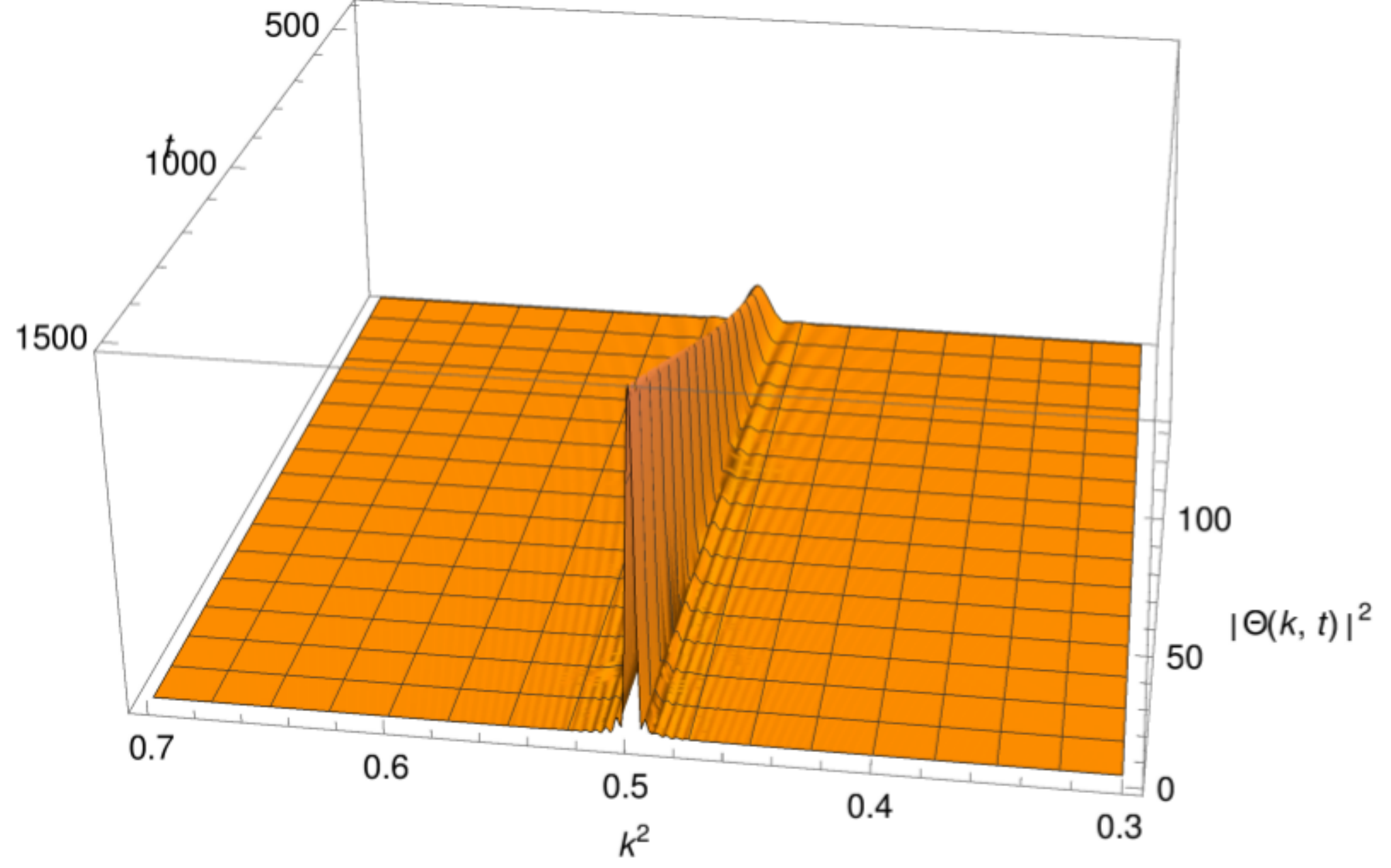}}}
    \caption{ $\Theta$ as a function of $k^2$ and time, for $\alpha=1/20$ and $\omega=3/2$, calculated from the leading order in \eqref{eq:ThetaSmall a}.}
\label{fig12}
\end{figure}

The following is the \emph{non-perturbative} (arbitrary coupling) form of the decay of the bound state.
\begin{Theorem}\label{Theo2}

  (i) The function $\theta$ in \eqref{eq:theta1} has a Borel summed {\bf transseries} representation (also known as a {\bf multi-instanton} expansion, \cite{Zinn}) convergent for all  $t>0$ 
    \begin{equation}
    \label{eq:19}
    \theta(t)=2 i\sum_{n\in\ZZ}\frac{R_n}{p_n}e^{p_n t}+\sum_{n\in\ZZ} e^{-i  \beta_n t}\int_0^{\infty}e^{-st} F_n(s;\a)ds
  \end{equation}
  with $ \beta_n\in\RR,\ p_n\in\CC,\,\Re p_n<0,\ R_n\in\CC$ as  in \eqref{valbn}-\eqref{eq:43}. The $F_n$ are analytic in $\a$ in a neighborhood of $[0,\infty)$, analytic in $s$ if $\Re s>0$ and in $\sqrt{s}$ for $s$ near $0$.  For large $|s|$, $F_n=O(s^{-3})$. The first sum converges factorially and the second at least as fast as $1/n^3$.

  (ii) Similarly, the function $\Theta$ is a Borel summed transseries 
  \begin{equation}
    \label{eq:341}
\Theta(k,t)=\sqrt{\frac{2}{\pi}}\frac{|k|}{1-i|k|}\left[\Phi(-i(1+k^2))+\sum_{n\in\ZZ}\frac{R_n e^{(p_n+ i+ik^2)t}}{p_n+i(1+k^2)}+\sum_{n\in\ZZ} e^{-i \beta_n t}\int_0^{\infty}e^{-st} G_n(s;\a)ds\right]
\end{equation}
where the $G_n$ have the same properties as the $F_n$.
\end{Theorem}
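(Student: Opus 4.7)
My plan is to obtain both transseries by contour deformation of the Bromwich inversion for $\phi$, and then to lift the result to $\theta$ and $\Theta$ via the integral expressions \eqref{eq:theta1} and \eqref{eq:theta1T}. Starting from $\phi(t)=\frac{1}{2\pi i}\int_{c-i\infty}^{c+i\infty}e^{pt}\F(p)\,dp$ with $c>0$, I would use Theorem \ref{T1} to justify pushing the contour to a vertical line $\Re p=c'$ with $\Re p_0<c'<0$. The bound $(1+|p|)^2\F(p)=O(1)$ forces the arcs at infinity to vanish; the simple poles $p_n$ are collected as residues; and the contour must be indented around horizontal branch cuts $\{-i\beta_n-s:s\ge 0\}$ emanating from the branch points on the imaginary axis. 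The residue sum is $\sum_n R_n e^{p_n t}$, convergent factorially by \eqref{eq:43}, while each Hankel-type loop yields $e^{-i\beta_n t}\int_0^\infty e^{-st}f_n(s;\a)\,ds$ with $f_n(s;\a)=-\frac{1}{2\pi i}$ times the discontinuity of $\F$ at $-i\beta_n-s$. Analyticity of $f_n$ in $\a$ and in $\sqrt s$ near $0$, and the $O(s^{-3})$ decay, transfer directly from the properties of $\F$ stated in Theorem \ref{T1}.

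The statement for $\theta$ then follows by substituting the transseries for $\phi$ into \eqref{eq:theta1} and integrating termwise. The pole contributions integrate to $\frac{R_n}{p_n}e^{p_n t}$ plus a constant; the cut contributions, after swapping the $ds$ and $du$ integrals by absolute convergence, become $-e^{-i\beta_n t}\int_0^\infty \frac{f_n(u;\a)e^{-ut}}{u+i\beta_n}\,du$ plus a constant. Absorbing the extra $1/(u+i\beta_n)$ into the renamed $F_n$ is what upgrades the $n$-rate from the $n^{-1}$ decay inherited from $\F$ to the claimed $n^{-3}$, once one additionally exploits the $\sqrt s$ local structure via one integration by parts. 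All accumulated constants must collapse by the initial condition $\theta(0)=1$; I would verify this as an independent sum rule.

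For $\Theta$ the cleanest route is to Laplace transform \eqref{eq:theta1T} in $t$, obtaining $\widetilde\Theta(k,p)=\sqrt{\tfrac{2}{\pi}}\tfrac{|k|}{1-i|k|}\,p^{-1}\F(p-i(1+k^2))$, and to repeat the contour-deformation argument. The simple pole at $p=0$ yields the constant piece $\F(-i(1+k^2))$; the shifted poles at $p=p_n+i(1+k^2)$ yield the residue series of \eqref{eq:341} with the denominators $p_n+i(1+k^2)$ as written; and the shifted branch cuts yield Hankel contributions of the stated $G_n$ form, the $G_n$ inheriting the regularity of the $f_n$. For $k\ne 0$ and $1+k^2\ne\beta_n$ the pole at $p=0$ is unobstructed, so $\F(-i(1+k^2))$ is well defined by the analytic continuation described in Theorem \ref{T1}(i).

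The main obstacle is the rigorous justification of the contour deformation through infinitely many branch points accumulating on the imaginary axis: one must show that the vertical segments between consecutive cuts contribute a decreasing remainder, and that the resulting double series is absolutely summable uniformly on compact $t$-intervals in $(0,\infty)$. The estimates lean on the uniform bound $(1+|p|)^2\F(p)=O(1)$ away from the poles, the factorial decay of the $R_n$, and the local $\sqrt s$ behavior at each $-i\beta_n$; controlling the latter uniformly in $n$ is where the functional equation \eqref{eq7} — which was used in Theorem \ref{T1} to establish the pole/cut structure of $\F$ — feeds back to give quantitative estimates on the discontinuities $f_n$.
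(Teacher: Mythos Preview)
Your strategy coincides with the paper's: deform the Bromwich contour for $\phi$ into residues at the $p_n$ plus Hankel loops around the branch points $-i\beta_n$, then integrate to recover $\theta$ and $\Theta$. Two small corrections, however, are worth making.

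First, your contour description has a slip: pushing to a vertical line $\Re p=c'$ with $\Re p_0<c'<0$ leaves \emph{all} the poles (they sit on the single line $\Re p=\Re p_0$) to the left of the new contour, so no residues are collected. What the paper actually does---and what your later wording implicitly assumes---is to push the contour to $\Re p\to-\infty$: for $t>0$ the factor $e^{pt}$ together with the $(1+|p|)^{-2}$ bound on $\Phi$ away from the poles kills the far contribution, leaving exactly the residue sum and the family of Hankel integrals along the horizontal cuts from $-i\beta_n$.

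Second, for $\theta$ the paper writes $\theta(t)=-2i\int_t^\infty\phi(s)\,ds$ (using the previously established $\theta(\infty)=0$) and integrates termwise, which produces \eqref{eq:19} directly with no leftover constants. Your $\int_0^t$ route also works, but your proposed check via $\theta(0)=1$ is the harder direction; invoking $\theta(\infty)=0$ disposes of the constants immediately. Your Laplace-transform derivation for $\Theta$ is a clean alternative to the paper's terse ``obtained by integration'' for that part. Finally, the $n^{-3}$ rate for the cut sum already follows from the $(1+|p|)^{-2}$ bound on $\Phi$ (giving $f_n=O(n^{-2})$) together with the extra $1/(s+i\beta_n)$ picked up on integration; the integration-by-parts step you mention is not needed.
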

\begin{Corollary}
 For all $\a,\omega >0$ we have
  \begin{equation}
    \label{eq:34}
\lim_{t\to\infty}\theta(t)=0,\ \  \lim_{t\to\infty} \Theta(k,t)=\sqrt{\frac{2}{\pi}}\ \frac{|k|}{1-i|k|}\ \Phi(-i(1+k^2))
\end{equation}
  
\end{Corollary}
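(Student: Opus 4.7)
The strategy is to read off both limits directly from the Borel-summed transseries provided by Theorem~\ref{Theo2}. In the formula \eqref{eq:19} for $\theta$, every explicit term is either exponentially decaying in $t$ or is the Laplace transform of an $L^1$ density evaluated at $t$; in the formula \eqref{eq:341} for $\Theta$, the only summand that is not of this type is the $t$-independent $\Phi(-i(1+k^2))$, which is precisely the claimed asymptotic value. So the corollary reduces to showing that both of the sums in \eqref{eq:19} and both of the sums in \eqref{eq:341} vanish as $t\to\infty$.

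For the discrete (pole) sum, Theorem~\ref{T1}(ii) gives $p_n=p_0-in\omega$, so $\Re p_n=\Re p_0<0$ is the \emph{same} negative number for every $n\in\ZZ$. Combined with the factorial decay $R_n=O(|n!|^{-1/2})$ from \eqref{eq:43}, the elementary bound
\[
\Bigl|\sum_{n\in\ZZ}\frac{R_n}{p_n}e^{p_n t}\Bigr|\le e^{(\Re p_0)t}\sum_{n\in\ZZ}\frac{|R_n|}{|p_n|}
\]
gives uniform exponential decay. The same estimate handles the first sum in \eqref{eq:341} once one notes that $|p_n+i(1+k^2)|\ge|\Re p_0|>0$ uniformly in $n$, because $i(1+k^2)$ is purely imaginary so the denominators cannot vanish.

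For the continuous sum, I would fix $n$ first: since $F_n(s;\a)$ is bounded near $s=0$ (it is analytic in $\sqrt{s}$ there) and satisfies $F_n=O(s^{-3})$ at infinity, we have $F_n(\cdot;\a)\in L^1(\RR^+)$, and dominated convergence gives $\int_0^\infty e^{-st}F_n(s;\a)\,ds\to 0$ as $t\to\infty$; the unit-modulus prefactor $e^{-i\beta_n t}$ is harmless. To commute the limit with $\sum_n$ I would invoke the $1/n^3$ convergence rate asserted in Theorem~\ref{Theo2}(i), which provides a $t$-uniform summable majorant and reduces the matter to finitely many already-established pointwise limits. The identical argument applies to the $G_n$-sum in \eqref{eq:341}, completing the proof.

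The only real subtlety is verifying that the stated $1/n^3$ decay is uniform in $t\ge 0$, i.e.\ that the norms $\|F_n(\cdot;\a)\|_{L^1}$ and their $G_n$ analogues are themselves $O(n^{-3})$; this is part of what Theorem~\ref{Theo2} asserts and is where one cashes in on the construction of the $F_n,G_n$. Without such a built-in uniform bound one would have to extract the extra decay in $n$ by hand, for instance by integrating by parts in $s$ to convert smoothness of $F_n$ into additional large-$t$ decay, or by exploiting cancellation among the $\beta_n$-oscillations; but here the theorem packages this work for us.
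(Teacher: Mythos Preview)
Your proposal is correct and follows exactly the approach the paper intends: the corollary is stated without proof precisely because it is meant to be read off from the transseries representations \eqref{eq:19} and \eqref{eq:341}, using $\Re p_n=\Re p_0<0$ for the pole sums and the uniform-in-$t$ summability of the branch-cut contributions for the Laplace-integral sums. Your discussion of the needed uniformity (the $1/n^3$ bound being a bound on $\|F_n\|_{L^1}$, independent of $t$) is the right point to flag, and indeed the paper's proof of Theorem~\ref{Theo2} establishes the relevant $\sup_{n,\tau}$ estimate on $(n^2+1)g$ that underlies it.
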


\subsection{Perturbation theory: results for small $\a$.}\label{PerturbTh}

In this section we assume that $\omega^{-1}\notin \NN$. See Note\,\ref{Note5}  regarding $\omega^{-1}\in \NN$. 

{\bf Notation:} In the rest of the paper ``$o_a$'' denote  functions analytic and vanishing at $\a=0$.

\begin{Theorem}\label{T4} 

Assume $\omega^{-1}\notin \NN$. Let $p_0=p_0(\a,\omega)$ as in Theorem\,\ref{T1}\, (ii).

(i)  For $\omega>1$, we have, for $\a$ small enough,
  \begin{equation}
    \label{eq:38}
    p_0=-\a^2\,\frac {\sqrt {\omega-1}+i\sqrt {1+\omega}}{2\omega}\, (1+o_a)
  \end{equation}

With $m$ the least integer for which $m\omega>1$ we have   
  \begin{equation}
    \label{eq:39}
    \Re p_0=-\frac{1}{m\omega}\frac{\sqrt{m\omega-1}}{\prod_{k<m}(1-\sqrt{1-k\omega})^2}\frac{\a^{2m}}{2^{2m+1}}(1+o_a)
  \end{equation}
  
  (ii) The residues (see \eqref{eq:43} for arbitrary $\alpha$) satisfy
\begin{equation}
    \label{eq:333}
R_{0} = \frac{i m \a^m p_0}{2^m \prod_{k<m}(1-\sqrt{1-k\omega})} (1+o_a) 
\end{equation}
 Furthermore, as $n\to\infty$, $$R_n/R_0=O(\a ^{2|n|}/\Gamma(n/2))\ \  \text{and} \  G_n=O(\a^{2n+2}/n^3)$$
where the $G_n$ are defined in \eqref{eq:341}.

 (iii) As a function of $\a$,  $\theta(t)$ is  analytic for  small  $\a\in \CC$ and real-analytic for $\a\in\RR$.
 
 With $m$ as in (i),  on the scale $t\in (0,o(\a ^{-2m}\ln\a)) $ we have\footnote{This is the Fermi Golden Rule of exponential decay of the bound state valid for small amplitudes and moderately large times.} 
 $$|\theta(t)|^2= e^{-2\Re p_0 t}(1+o(1))$$
As $t\to\infty$,
$$|\theta(t)|^2=O(\a ^4 t^{-3})$$

 (iv) The distribution of energies satisfies
 \begin{equation}
   \label{eq:ThetaSmall a}
   \Theta(k,t)=\sqrt{\frac2{\pi}}\,\,\frac{\a \omega k}{1-ik}\,\,\frac{1-e^{-\frac{\a  ^2 t \sqrt{\omega -1}}{2 \omega }} e^{i t \left(\frac{\a  ^2
   \sqrt{\omega +1}}{2 \omega }+k^2-\omega +1\right)}}{\a  ^2 \left(\sqrt{\omega -1}-i
   \sqrt{\omega +1}\right)-2 i \omega  \left(k^2-\omega +1\right)}(1+o_a)
\end{equation}

\end{Theorem}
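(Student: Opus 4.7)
Theorem \ref{T4} is a small-$\alpha$ asymptotic expansion of the quantities produced by Theorems \ref{T1} and \ref{Theo2}. My plan is to first fix the leading behaviour of the dominant pole $p_0(\alpha,\omega)$ and its residue $R_0$ from the continued-fraction characterization of Theorem \ref{T1}(iii), and then to obtain (iii) and (iv) by substituting these into the transseries \eqref{eq:19}, \eqref{eq:341} together with routine estimates of the Laplace tails.

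For (i), I start from the homogeneous equation $A(p,\alpha;\omega)=B(p,\alpha;\omega)$ obtained by iterating \eqref{eq7} and expand both continued fractions around $p=0$ in powers of $\alpha$ (since $\F$ has no poles at $\alpha=0$, $p_0(\alpha)\to 0$). Each descent in the continued fraction brings a factor $\alpha^2$ and a denominator $i\sqrt{ip+k\omega-1}+1$: for $k<m$ the radicand is negative, the square root is purely imaginary at $p=0$, and the denominator reduces to the real number $1-\sqrt{1-k\omega}$ that appears in \eqref{eq:39}; for $k=m$ the radicand turns positive, the square root is real, and this is what injects a nontrivial $\Re p_0$. Truncating at depth $m$ and solving gives \eqref{eq:38} when $m=1$ (the case $\omega>1$) and \eqref{eq:39} in general; $\omega^{-1}\notin\NN$ guarantees $\sqrt{m\omega-1}\neq 0$ so the leading term does not degenerate. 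For (ii), once $p_0$ is known, plugging $\F(p)\sim R_n/(p-p_n)$ into \eqref{eq7} at $p$ near $p_n$ yields a two-sided linear recursion $R_{n\pm1}=c^{\pm}_n(\alpha,\omega)R_n$; iterating this $m$ times starting from $R_0$ recovers \eqref{eq:333}. For large $|n|$, $c^{\pm}_n=O(\alpha/\sqrt{|n|\omega})$ since the denominators grow like $\sqrt{|n|\omega}$, and the product telescopes to $R_n/R_0=O(\alpha^{2|n|}/\Gamma(n/2))$. The same recursion applied to the regular pieces of $\F$ at the branch points $-i\beta_n$, together with the observation that the source term $\alpha\omega/(\omega^2+p^2)$ feeds only $n=0,\pm1$, controls $G_n$.

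For (iii), analyticity of $\theta(\cdot,\alpha)$ follows immediately from Theorem \ref{T1}(ii) and the uniform convergence of \eqref{eq:19}. On the window $t=o(\alpha^{-2m}|\ln\alpha|)$ one has $|\Re p_0|t\ll|\ln\alpha|$, so the exponentials $e^{p_n t}$ (which share the same real part) remain polynomially large in $\alpha$ and dominate the Laplace tails; the modulation produced by $n\neq 0$ contributes only a bounded periodic factor which is absorbed into $1+o(1)$, giving Fermi's golden rule $|\theta(t)|^2=e^{-2\Re p_0 t}(1+o(1))$. For $t\to\infty$, $e^{p_n t}$ is exponentially small and the Laplace tails take over: the $\sqrt{s}$ branch of $F_n$ at $s=0$ (Theorem \ref{Theo2}(i)) yields a leading $t^{-3/2}$ asymptotic with coefficient $O(\alpha^2)$ from the dominant $n=0$ term (by (ii)), and squaring gives $|\theta|^2=O(\alpha^4 t^{-3})$. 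For (iv), retaining in \eqref{eq:341} only the $n=0$ pole and the leading part of $\F(-i(1+k^2))$ coming from the source of \eqref{eq7}, and using \eqref{eq:theta1T} with $\phi(s)\approx R_0 e^{p_0 s}$, the integral becomes $R_0(e^{(p_0+i(1+k^2))t}-1)/(p_0+i(1+k^2))$; substituting \eqref{eq:38} and \eqref{eq:333} with $m=1$ reproduces \eqref{eq:ThetaSmall a}, the case $m>1$ being analogous modulo rescaling.

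The principal obstacle is the bookkeeping in (i): one must consistently track the branches of $\sqrt{ip+k\omega-1}$ on the Riemann surface of $\F$ so that the truncated continued-fraction identity really produces $\Re p_0<0$ together with the explicit product $\prod_{k<m}(1-\sqrt{1-k\omega})^2$. Once the correct leading-order expressions for $p_0$ and $R_0$ are identified, the rest of the theorem reduces to substitution and elementary asymptotic analysis.
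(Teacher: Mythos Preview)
Your outline for (i), (iii), and (iv) is essentially the paper's approach: expand the matching condition $\rho(\sigma,\alpha)=1/\Omega(\sigma,\alpha)$ to depth $m$ in $\alpha$, then feed the resulting $p_0$ and $R_0$ into the transseries \eqref{eq:19}, \eqref{eq:341} and read off the asymptotics. The identification of $1-\sqrt{1-k\omega}$ for $k<m$ and the mechanism by which the first real radical $\sqrt{m\omega-1}$ generates $\Re p_0$ is correct.

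There is, however, a real gap in your argument for (ii). Taking residues of \eqref{eq7} at $p=p_n$ indeed gives the two-term recursion you describe, but this is exactly the \emph{homogeneous} recurrence \eqref{Kdesigma}: the source $\alpha\omega/(\omega^2+p^2)$ has its poles at $\pm i\omega$, not at $p_n$, and contributes nothing. Thus your recursion fixes only the ratios $R_n/R_0$ (and your $O(\alpha^{2|n|}/\Gamma(n/2))$ estimate is fine), but it cannot produce the \emph{value} \eqref{eq:333} of $R_0$ itself. ``Iterating $m$ times starting from $R_0$'' gives $R_{\pm m}$ in terms of $R_0$, not $R_0$ in terms of $p_0$; the overall scale of the eigenvector $(R_n)_n$ is undetermined by the homogeneous equation. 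To pin down $R_0$ you must bring in the inhomogeneous term, and this is where the paper does something you have not: it writes the Laurent expansion of $(I-K_0(\sigma,\alpha))^{-1}\ff$ at $\sigma=q_0$, projects onto the one-dimensional kernel, and obtains
\[
R_0=\Big(\text{$0$-th component of }\ \frac{\langle \ff_0,\yy_0^*\rangle}{\langle S_0\yy_0,\yy_0^*\rangle}\,\yy_0\Big),
\]
with $S_0=\partial_\sigma K_0|_{\sigma=q_0}$. The explicit small-$\alpha$ computation of $\yy_0,\yy_0^*,\ff_0,S_0$ (which the paper carries out componentwise for $m=1$) is what produces the factor $im\alpha^m p_0/2^m\prod_{k<m}(1-\sqrt{1-k\omega})$. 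You need either this resolvent-Laurent argument or an equivalent device that couples the residue to the source term; as written, your step for \eqref{eq:333} is circular.
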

\begin{Note}\label{Note5}
  If for some $m\in\NN$ we have $\ m\omega^{-1}=1+O(\a^{2m})$, which means that poles are close to branch points,  there is a smooth transition region where $R_0$ and $\Re p_0$ change from $O(\a^{2m+2})$ to $O(\a^{2m})$. We will not analyze this  intricate transition in the present paper.
\end{Note}

\begin{Corollary}
 The functions $\Theta$ and $\theta$ have fully convergent perturbation expansions in small $\a$, in sup norm away from $t=0$, provided we keep the power series of $p_n$ in the exponent in \eqref{eq:19} and \eqref{eq:341}.
\end{Corollary}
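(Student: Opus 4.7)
The plan is to exploit the structure of the Borel-summed transseries \eqref{eq:19} and \eqref{eq:341}. In each representation the dependence on $\a$ enters through three distinct ingredients: the exponents $p_n(\a)$, the residues $R_n(\a)$ together with the Borel densities $F_n(s;\a),\,G_n(s;\a)$, and the boundary value $\F(-i(1+k^2);\a)$. By Theorems \ref{T1}(ii) and \ref{Theo2} each of these is analytic in $\a$ in a neighborhood of $\a=0$. The obstruction to naive perturbation theory is that expanding the factors $e^{p_n(\a)t}$ in $\a$ would produce secular terms (polynomial growth in $t$) that destroy uniformity for large $t$. The statement is proved by substituting the convergent series $p_n(\a)=\sum_k c_{n,k}\a^k$ directly into $e^{p_n(\a)t}$ \emph{without} re-expanding, and Taylor-expanding only the pre-exponential data.

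Fix $t_0>0$ and work on $t\in[t_0,\infty)$. By Theorem \ref{T1}(iii) and \eqref{eq:42}, $\Re p_n=\Re p_0<0$ for every $n$, so $|e^{p_n(\a)t}|\le 1$. For each $n$ the prefactor $R_n(\a)/p_n(\a)$ is analytic at $\a=0$ (Theorem \ref{T1}(ii)), so the contribution $2i R_n e^{p_n t}/p_n$ admits a convergent Taylor series in $\a$ whose coefficients are uniformly bounded in $t\ge t_0$. For the Borel pieces, the analyticity of $F_n(s;\a)$ in $\a$, combined with the decay $F_n=O(s^{-3})$ and analyticity in $\sqrt{s}$ near $0$, legitimates termwise differentiation under the integral sign: writing $F_n(s;\a)=\sum_k \a^k F_n^{(k)}(s)$ gives $\int_0^\infty e^{-st}F_n(s;\a)\,ds=\sum_k \a^k\int_0^\infty e^{-st}F_n^{(k)}(s)\,ds$, with each coefficient bounded in $t\ge t_0$ via the integrable $s^{-3}$ tail. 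The same treatment applies to $G_n$ and to the Taylor series of $\F(-i(1+k^2);\a)$.

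The second step is to interchange the $n$-summation with the $\a$-Taylor expansion. Here the smallness estimates of Theorem \ref{T4}(ii) are decisive: $R_n/R_0=O(\a^{2|n|}/\Gamma(|n|/2))$ and $G_n=O(\a^{2n+2}/n^3)$, with an analogous bound for $F_n$, imply that at any prescribed order $\a^N$ only finitely many $|n|\le N_0(N)$ contribute, the remaining tail being $O(\a^{N+1})$ uniformly in $t\ge t_0$. This justifies rearrangement of the double sum over $n$ and Taylor order into a single convergent power series in $\a$ whose coefficients are uniformly bounded functions of $t$ on $[t_0,\infty)$.

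The main obstacle I anticipate is uniformity in $n$ of the $\a$-analyticity of the Borel densities $F_n,G_n$: one must verify that their radii of convergence in $\a$ do not shrink with $n$ and that their Taylor coefficients inherit the $n$-decay required for the interchange above. This is tracked by revisiting the iterative construction of $F_n$ and $G_n$ from the functional equation \eqref{eq7} used in the proof of Theorem \ref{Theo2}: each iteration supplies a factor of $\a$ together with a convolution that improves decay in $n$, yielding simultaneous control in $\a$ and $n$. Combining these ingredients produces the convergent small-$\a$ expansions of $\theta$ and $\Theta$ in sup norm on $[t_0,\infty)$, with $p_n(\a)$ kept in the exponents as asserted.
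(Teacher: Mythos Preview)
Your proposal is correct and follows essentially the same approach as the paper: both arguments rest on the termwise analyticity in $\a$ of the transseries \eqref{eq:19} and \eqref{eq:341} together with uniform convergence, keeping $e^{p_n(\a)t}$ unexpanded. The paper's proof is only two sentences (``uniform convergence and analyticity of each term''), and your version simply spells out the interchange of the $n$-sum with the $\a$-expansion; one minor citation fix: for $n=0$ the analyticity of $R_0/p_0$ at $\a=0$ requires \eqref{eq:333} (which exhibits the factor $p_0$ in $R_0$), not just Theorem~\ref{T1}(ii), since $p_0(0)=0$.
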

\begin{proof}
  This follows from uniform convergence and analyticity of each term in \eqref{eq:19} and \eqref{eq:341}. For the expansion to be uniformly rapidly convergent the exponentials should not be further expanded as  power series.
\end{proof}
\section{Proofs}
\subsection{Organization of the paper and main ideas}  We are interested in obtaining rapidly convergent expansions for  $\theta(t)$ and $\Theta(k,t)$ for all $\a$ and $\omega$. To achieve this we study in great detail the singularity structure of $\Phi(p)$. We prove in particular that $\Phi(p)$ has exactly one array of evenly spaced poles, for $\Re p<0$, and one array of branch points, for $\Re p=0$. Their location and residues determine, via the inverse Laplace transform, the transseries representation of $\Theta$ and $\theta$. To  show this rigorously for all $\a$ we first establish these facts for small $\a$ using compact operator techniques; we then extend them for arbitrary $\a$ by devising a periodic operator isospectral with the one of interest, whose pole structure can be analyzed by appropriate complex analysis tools.

The proof of Theorem\,\ref{T1} (i) is found in \S\ref{pfT1}, (ii) and (iii) in \S\ref{Sec5}, \eqref{eq:43} in \S\ref{abstract}. The functional equation \eqref{eq7} is rewritten as a parameter dependent equation on $\ell^2(\ZZ)$ and analyzed with compact operator techniques.

Section \S\ref{PropHomEq} contains results and notations used further in the paper. 

Theorem\,\ref{T4} (i) is proved in \S\ref{polsmalr}, (ii) in \S\ref{cabstract} and (iii),  (iv) in \S\ref{PfT434}. For small $\a$ the position of the poles is found from a continued fraction representation described in \S\ref{PropHomEq}. The information about the poles for larger  $\a$ relies on the analysis of a periodic compact operator isospectral to the main one and zero-counting techniques.  Theorem\,\ref{Theo2} is proved in \S\ref{pf2}. 

\subsection{Proof of Theorem\,\ref{T1}(i)}\label{pfT1}

Denoting 
\begin{equation}\label{Phig}
p=-iq,\ \ \ \F(p)=g(q)
\end{equation}
 \eqref{eq7} becomes
\begin{equation}
  \label{eq:rec11}
  g(q)=\a h(q+\omega)g(q+\omega)-\a h(q-\omega)g(q-\omega)+f(q)
\end{equation}
where
$$ h(q):=\frac{1}{2}\frac{1}{\sqrt{q-1}-i},\ \ \ 
f(q):=-\frac{\a\omega}{q^2-\omega^2}$$
It turns out that the pole of $h$ at $q=0$ has no bearing on the regularity of the  solutions, as the equation can be regularized in a number of ways.  One is presented in detail in \cite{CMP}. A simpler way is presented in \S\ref{simpler}.

It is convenient to discretize \eqref{eq:rec11}. With the notation
\begin{equation}
  \label{notaq}
  q=:q_n=\sigma+n\omega\ \ \text{with }{ \Re}\sigma\in[0,\omega),\ n\in\ZZ
\end{equation}
 and setting $ h_n=h(q_n)$, $f_n=f(q_n)$, we obtain the difference equations with parameters $\sigma,\omega$
\begin{equation}\label{Kdesigma}
g_n=\a h_{n+1}g_{n+1}-\a h_{n-1}g_{n-1}+ f_n 
\end{equation}
or, in operator notation,
\begin{equation}\label{Kdesigmaop}
\gg=K_0(\sigma,\alpha)\gg+\ff 
\end{equation}

\subsubsection{Regularization of the operator} \label{simpler}

We rewrite \eqref{Kdesigma}. Let
\begin{equation}
  \label{eq:not}
 d_n=\tfrac{1}{4}(\sigma+n\omega-i\beta)^{-1}\text{ and  }b_n= d_n/(h_{n+1} h_{n-1});\ \ \ \ (\beta>0)
\end{equation}
Then,
\begin{equation} \label{eq:rec201}
  g_n= \left(1- b_n\right)g_n+\frac{\a d_n}{h_{n-1}} g_{n+1}-\frac{\a d_n}{h_{n+1}}g_{n-1}+ b_n f_n 
  \end{equation}
or,
\begin{equation} \label{eq:rec201op}
  \gg=K(\sigma,\a) \gg+B\ff
\end{equation}
where $(Bf)_n=b_nf_n$. Now $K$, $\ff$ and $B$  are pole-free in the closed lower half plane (analyticity of the solution  in the upper half plane is known, see the beginning of \S\ref{prior}). Note that $\ff$ is a multiple of $\a$.

\z{\bf Extension.} It is convenient to remove the restriction in \eqref{notaq} on $\sigma$, and allow $\sigma\in\CC$.

\begin{Remark}\label{KK0}
Note that $K=I-B+BK_0$ therefore $(I-K_0)^{-1}=(I-K)^{-1}B$, so equations \eqref{Kdesigmaop} and \eqref{eq:rec201op} are equivalent wherever $B$ is invertible, that is, for $\sigma\not\in 1+\omega\ZZ$. We will rely on this equivalence to choose the more convenient one for a particular purpose.

\end{Remark}

\begin{Proposition}
  (i) The operator $K(\sigma,\a)$ is  compact in $\ell^2(\ZZ)$. It is linear-affine in $\a$, and analytic in $\sigma$ except for a square root branch point at  $1-\lfloor \omega^{-1}\rfloor\omega$.
  
  (ii) For   $\sigma\ne 0$, $K_0(\sigma,\alpha)$ has the properties of $K$ listed above.
\end{Proposition}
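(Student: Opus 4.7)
The plan is to verify the three claims by direct inspection of the matrix of $K(\sigma,\a)$ in the standard basis of $\ell^2(\ZZ)$. Reading \eqref{eq:rec201} entrywise, $K$ is tridiagonal with
\[
K_{n,n}=1-b_n,\qquad K_{n,n+1}=\frac{\a d_n}{h_{n-1}},\qquad K_{n,n-1}=-\frac{\a d_n}{h_{n+1}},
\]
so all three claims reduce to pointwise statements about the scalar sequences $h_j,d_n,b_n$ from \eqref{eq:not}, together with one routine operator-theoretic observation.

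For compactness I would first obtain the large-$|n|$ asymptotics. With $q_j=\sigma+j\omega$, a direct expansion gives $h_j\sim(2\sqrt{|j|\omega})^{-1}$ and $d_n\sim(4n\omega)^{-1}$, whence the off-diagonal entries $\a d_n/h_{n\mp 1}$ are $O(|n|^{-1/2})$. Expanding the product $(\sqrt{q_{n+1}-1}-i)(\sqrt{q_{n-1}-1}-i)$ and dividing by $\sigma+n\omega-i\beta$ gives $1-b_n = 2i/\sqrt{n\omega}+O(|n|^{-1})$. Hence all three diagonals of $K$ lie in $c_0(\ZZ)$, and decomposing $K=D_0+\a(D_+S+D_-S^{-1})$ with $D_0,D_\pm$ diagonal multiplications by $c_0$-sequences and $S$ the shift on $\ell^2(\ZZ)$ exhibits $K$ as a sum of compact operators. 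The same decomposition also shows $K=M_0+\a K_1$ is affine in $\a$, since $b_n$ and $d_n$ do not involve $\a$.

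For analyticity in $\sigma$, each $h_j(\sigma)$ is meromorphic with a single pole at $\sigma=-j\omega$ (where $\sqrt{q_j-1}=i$) and a square root branch point at $\sigma=1-j\omega$, while each $d_n$ is meromorphic with a pole only at $\sigma=i\beta-n\omega$ in the open upper half plane. The definition $b_n=d_n/(h_{n+1}h_{n-1})$ is chosen precisely so that the factor $1/(h_{n+1}h_{n-1})$ vanishes where $h_{n\pm 1}$ is singular; this cancels the poles of $h_{n\pm 1}$ uniformly in $n$ in every entry of $K$. What survives is the square root branch structure of $h_{n\pm 1}$. Within the fundamental strip $\Re\sigma\in[0,\omega)$, the set $\{1-j\omega\}_{j\in\ZZ}$ contains exactly one point, $\sigma_\ast=1-\lfloor\omega^{-1}\rfloor\omega$, realised by $j=\lfloor\omega^{-1}\rfloor$, and this is the unique branch point of $K$ in the strip, giving (i).

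For (ii) the same asymptotic analysis applied to $(K_0)_{n,n\pm 1}=\pm\a h_{n\pm 1}=O(|n|^{-1/2})$ shows that $K_0$ is compact and linear in $\a$, with the identical branch point at $\sigma_\ast$. The only new feature is that $K_0$ lacks the regularising factor $1/(h_{n+1}h_{n-1})$ present in $K$, so the poles of $h_{n\pm 1}$ at $\sigma=-(n\pm 1)\omega$ survive; within the strip this occurs only at $\sigma=0$, via $h_0$ in rows $n=\pm 1$, explaining the hypothesis $\sigma\ne 0$. The main obstacle I foresee is verifying the pole cancellation in (i) uniformly in $n$ — that removing the singularity at $\sigma=-(n\pm 1)\omega$ in every matrix entry keeps $K$ pole-free throughout the strip — which is the algebraic content of the rewriting \eqref{eq:rec201} and the reason $b_n$ was defined as in \eqref{eq:not}.
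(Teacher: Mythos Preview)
Your proposal is correct and follows essentially the same approach as the paper: the paper's proof observes that $K$ is built from shifts composed with diagonal multiplications whose entries are $O(|n|^{-1/2})$ (yielding compactness), and that the analyticity properties are ``manifest'' from the fact that $h(q)$ has a pole at $q=0$ and a branch point at $q=1$. Your write-up simply makes these steps explicit, including the entrywise asymptotics and the pole cancellation built into the definition of $b_n$. One small remark: your concern about verifying cancellation ``uniformly in $n$'' is slightly overstated, since in the strip $\Re\sigma\in[0,\omega)$ the only pole of any $h_j$ occurs at $\sigma=0$ (via $h_0$), so only the entries for $n=\pm 1$ are at stake; the algebra you describe handles exactly this case.
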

\begin{proof}
  (i) For compactness, note that $K(\sigma,\a)$ is a composition of two shifts and  multiplications by diagonal operators whose elements vanish in the limit $|n|\to\infty$ (all its coefficients, see \eqref{eq:rec201},  are $O(|n|^{-1/2})$). Noting that $h(q)$ has a pole at $q=0$ and a branch point at $q=1$, the analyticity properties are manifest.
 The proof of (ii) is similar.
\end{proof}

Theorem\,\ref{T1}(i) now follows from the results above, Proposition\,\ref{Prop2} below,  and an argument similar to (and simpler than) the one in \S\ref{Theo2}.

\begin{Proposition}\label{Prop2}
The homogeneous equation
\begin{equation}
  \label{eq:rec1h}
   \gg=K(\sigma,\a)\gg
\end{equation}
has no nontrivial $\ell^2$ solution if $\Im\sigma\Ge 0$. By the Fredholm alternative \eqref{eq:rec201op} has a unique solution which has the same analyticity properties as $K$.  

In particular, $(I-K(\sigma,\a))^{-1}$ is analytic if $\Im\sigma>0$ and on each segment $(1+n\omega,1+(n+1)\omega),\ n\in\ZZ$; at $\sigma=\beta_n=1+n\omega$ it is analytic in $\sqrt{\sigma-\beta_n}$.

\end{Proposition}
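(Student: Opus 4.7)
The plan is to reduce the proposition to the triviality of $\ker(I-K(\sigma,\a))$ in $\ell^2(\ZZ)$ for $\Im\sigma\ge 0$. Granted this, compactness of $K$ (just proved) and the Fredholm alternative yield existence and uniqueness of $\gg$ solving $(I-K)\gg=B\ff$. Analytic Fredholm theory applied to the compact-operator-valued map $(\sigma,\a)\mapsto K(\sigma,\a)$ then promotes the analyticity of $K$ to $(I-K)^{-1}$ throughout the claimed domain; at each branch point $\sigma=\beta_n=1+n\omega$ one passes to the uniformizer $\tau=\sqrt{\sigma-\beta_n}$, in which $K$ is analytic on a small disk about $\tau=0$, and applies the same theorem in $\tau$ to obtain the stated $\sqrt{\sigma-\beta_n}$-regularity of $(I-K)^{-1}$.

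For the kernel triviality, suppose $\gg\in\ell^2$ satisfies $(I-K(\sigma,\a))\gg=0$ with $\Im\sigma\ge 0$. Off the exceptional lattice $\sigma\in 1+\omega\ZZ$, Remark~\ref{KK0} shows this is equivalent to the homogeneous three-term recursion $g_n=\a h_{n+1}g_{n+1}-\a h_{n-1}g_{n-1}$. Setting $g_n=\F_\star(-i(\sigma+n\omega))$ and using the recursion to propagate the data, one obtains a function $\F_\star$ satisfying the homogeneous version of \eqref{eq7}, analytic in the closed right half $p$-plane. By the Laplace-transform identification recalled at the start of \S\ref{prior}, $\F_\star$ is the transform of a $\phi_\star$ satisfying the forcing-free Volterra equation $\phi_\star(t)=\a\sin\omega t\int_0^t\phi_\star(s)\,\et(t-s)\,ds$. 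The explicit formula for $\et$ exhibits only a weak $s^{-1/2}$ singularity at the origin, so a standard Gronwall estimate for weakly singular Volterra kernels forces $\phi_\star\equiv 0$, hence $\F_\star\equiv 0$, hence $\gg=0$. The remaining branch cases $\sigma\in 1+\omega\ZZ$ follow by continuity in the uniformizer $\tau$: kernel triviality on a dense set together with analytic dependence of $K$ on $\tau$ forbids the sudden appearance of a kernel at $\tau=0$.

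The main technical obstacle is justifying the reconstruction of $\F_\star$ from the discrete $\ell^2$ data: a priori an $\ell^2$ sequence pins down $\F_\star$ only on an arithmetic progression, not as an analytic function amenable to Laplace inversion. A clean operator-theoretic workaround is a two-step argument: first, a Hilbert--Schmidt estimate shows $\|K(\sigma,\a)\|_{HS}\to 0$ as $\Im\sigma\to+\infty$ (the matrix entries decay like $|n|^{-1/2}\cdot|\sigma+n\omega|^{-1}$), so Neumann series produces $(I-K)^{-1}$ for $\Im\sigma\gg 1$; second, analytic Fredholm theory then forces the set of non-invertibility to be a discrete analytic subvariety in $\sigma$ within the upper half-plane, and the Volterra-uniqueness argument above precludes any such point from lying in $\Im\sigma\ge 0$.
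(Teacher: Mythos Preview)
Your reduction to kernel triviality via compactness and analytic Fredholm theory is fine, and the uniformizer argument for the branch-point regularity is the right idea. The gap is in the kernel-triviality step itself. The reconstruction of an analytic $\Phi_\star$ on the right half $p$-plane from an $\ell^2$ sequence sampled on a single arithmetic progression $\{-i(\sigma+n\omega)\}$ is not justified: the recursion only links values along that one progression, so ``propagating the data'' cannot manufacture an analytic function of $p$, let alone one obeying the growth/decay conditions needed to be a Laplace transform. You flag this yourself, but your workaround does not close the gap: analytic Fredholm in the open upper half plane, seeded by Neumann convergence for $\Im\sigma\gg 1$, only tells you the non-invertibility set is discrete there; to rule out those isolated points you invoke ``the Volterra-uniqueness argument above,'' which is precisely the incomplete step. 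The continuity argument at the branch points has the same defect: triviality of the kernel on a dense set together with analyticity of $K$ in $\tau$ does \emph{not} forbid an isolated zero of the Fredholm determinant at $\tau=0$.

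The paper's argument avoids the Laplace reconstruction entirely by a direct positivity (energy) estimate at the $\ell^2$ level. Setting $u_n=h_n g_n$, the homogeneous recursion becomes $h_n^{-1}u_n=\a(u_{n+1}-u_{n-1})$; pairing with $\bfu$ gives $\sum_n h_n^{-1}|u_n|^2=2i\a\,\Im\sum_n u_{n+1}\overline{u}_n$, which is purely imaginary. For $\Im\sigma\ge 0$ one has $\Re(h_n^{-1})=2\Re\sqrt{q_n-1}\ge 0$, with strict positivity for all $n$ when $\Im\sigma>0$ and for all $n$ with $q_n>1$ when $\Im\sigma=0$; this forces the corresponding $u_n$ to vanish, and the recursion then propagates zeros to all $n$. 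This is both shorter and handles the closed half plane (including the real segments and branch points) uniformly, with only a small extra check at the pole $\sigma=0$ of $h_0$ handled by the regularization.
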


The proof is given in  \cite{OCJLAR}; for completeness, we sketch the argument in the Appendix.

\subsection{Further properties of the homogeneous equation}\label{PropHomEq}
The general theory of recurrence relations \cite{Immink} shows that the homogeneous part of \eqref{Kdesigma}  has two linearly independent solutions, one that grows like $(\a/2)^{-n} (n!)^{1/2}$ and one that decays like $(\a/2)^n/ (n!)^{1/2}$ for $n\to \infty$, and two similar solutions for $n\to-\infty$; the one that decays at $+\infty$ is different from  the one that decays at $-\infty$, unless $1$ is in the  $\ell^2$ spectrum of $K_0(\sigma,\a)$. Since we need more details, we reprove the relevant claims.  The main results are given in Corollaries\,\ref{Cconv},\,\ref{Cgenrec}. 

In this section it is convenient to work with the continuous equations  \eqref{eq:rec11}.
Its homogeneous part is
\begin{equation}
  \label{eq:rec2n}
  g(q)=\a h(q+\omega)g(q+\omega)-\a h(q-\omega)g(q-\omega)
\end{equation}

Lemma\,\ref{Lconfrac} shows the existence of a solution of \eqref{eq:rec2n} which goes to zero as  $n\to +\infty$ for $\a$ not too large, with tight uniform estimates for all $q\in\RR$, and of a similar solution for $n\to -\infty$. Lemma\,\ref{Llargeq} shows existence of such solutions for any $\a>0$, providing estimates only for $|q|$ large enough.

Looking for a solution that decays for large $q$ we define 
$$\rho(q):=\rho(q;\a)=g(q)/g(q-\omega)  $$
 and obtain from \eqref{eq:rec2n}
\begin{equation}
  \label{eq:eqconfrac}
  \rho (q)=\mathcal{N}(\rho)
\end{equation}
where $\mathcal{N}$ is the nonlinear operator
$$\mathcal{N}(\rho)(q)=-\frac{\a h(q-\omega)}{1-\a h(q+\omega) \rho (q+\omega)}$$

Similarly, looking for solutions which decay for $q\to -\infty$ the ratio
$$\Omega(q):=\Omega(q;\a)=g(q-\omega)/g(q)$$
 satisfies
\begin{equation}
  \label{eq:eqconfracNeg}
  \Omega \left( q \right) =\mathcal{M}(\Omega)
\end{equation}
where
$$\mathcal{M}(\Omega)(q)={\frac {\a h \left( q \right) }{1+\a h \left( q-2\,
\omega \right) \Omega \left( q-\omega \right) }}  $$

\begin{Notations}\label{Nota}
As usual, a domain in $\CC$ is an open, connected subset.
$\HHL$ denotes the open lower half plane in $\CC$.

 Let 
$$A>0,\ \ \ \ \a_A=A/(1+A^2)$$
 and denote   
$$J_{N}=\left\{ q\, |\, \Re(q)\Ge N\omega, \, \Im (q)\in[0,\epsilon]\right\}$$
 (for a suitably small $\epsilon$). Consider the Banach space $\mathcal{C}(J_{N})$ of functions continuous in the strip $J_{N}$, with the sup norm. Let $\mathcal{C}(\tilde{J}_{-1})$ 
denote the Banach space of continuous functions in 
$$\tilde{J}_{-1}:=\{ q| \Re(q)\Le -\omega, \Im (q)\in[0,\epsilon]\}$$
We denote by  $\mathcal{R}$ the class of functions which are real-analytic in $\a$ for all $\a\in\RR^+$ and in $\q\in\HHL$, continuous on $\overline{\HHL} $ and with possible square root branch points at $q\in 1+\omega\ZZ$. 

\end{Notations}

\begin{Remark}\label{cuts} {\textsl By the usual properties of the Laplace transform, $g$ is analytic in the upper half plane. Since below we are interested in the properties of $g$ in the lower half plane it is convenient to place the branch cuts in the upper half plane. 
Later, in \S\ref{bccon}, when we deform the contour of an inverse Laplace transform (in $q$ it is horizontal, in the upper half plane), the points on the curve are moved vertically down, yielding a collection of vertical Hankel contours \footnote{A Hankel contour is a path surrounding a singular point, originating and ending at infinity, \cite{Krantz}.} around the branch points $1+n\omega$ and residues. For this particular purpose, placing the  cuts in the upper or lower half plane can be seen to be  equivalent.}  
\end{Remark}

\begin{Lemma}\label{Lconfrac}

(i)  For $|\a|<\a_A$, the operator $\mathcal{N}$ defined in \eqref{eq:eqconfrac} is contractive in the ball $\|\rho\|<A$ in $\mathcal{C}(J_{N})$; the contractivity factor is $\tfrac{1}{4}\a^2(1+o(1))$ as $\a\to 0$.  

Thus  \eqref{eq:eqconfrac} has a unique fixed point $\rho\in \mathcal{C}(J_{N})$. Also,   $\rho$ is analytic in $\a$ for $|\a|<\a_A$ and satisfies $|\rho(q)|\Le \frac12 \a (1+o(1))$ as $\a\to 0$.  


(ii) The operator $\mathcal{M}$ is contractive in a ball $\|\Omega\|<A$ in $\mathcal{C}(\tilde{J}_{-1})$.
\end{Lemma}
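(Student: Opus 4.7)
The plan is to prove both parts by the Banach fixed point theorem applied to $\mathcal{N}$ (resp.\ $\mathcal{M}$) on the closed ball $\mathcal{B}_A = \{\rho \in \mathcal{C}(J_N) : \|\rho\| \le A\}$ (resp.\ the analogous ball in $\mathcal{C}(\tilde{J}_{-1})$). Since both operators are rational in their argument, with analytic coefficients in $\a$, both self-mappedness and contractivity are obtained from elementary pointwise estimates, and analyticity of the fixed point in $\a$ then follows for free from the uniform convergence of the Picard iterates.

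First I would establish the auxiliary estimate $|h(q\pm\omega)| \le \tfrac12(1+o(1))$ uniformly on $J_N$ (for $N$ sufficiently large that $\Re q \ge N\omega$ keeps $q\pm\omega$ away from the pole at $0$ and from the branch point at $1$, and with the branch of $\sqrt{\cdot}$ placed as explained in Remark \ref{cuts}). The key identity is $|\sqrt{q-1}-i|^2 = |\sqrt{q-1}|^2 + 1 - 2\Im\sqrt{q-1}$, which on $J_N$ is bounded below by $1$ and grows like $|q|$ at infinity; so $h$ is bounded by $1/2$ and decays at infinity. Then for $\rho \in \mathcal{B}_A$ and $|\a|<\a_A = A/(1+A^2)$, the denominator in $\mathcal{N}(\rho)$ is bounded below:
$$|1-\a h(q+\omega)\rho(q+\omega)| \ge 1 - \tfrac12 \a A,$$
and hence
$$\|\mathcal{N}(\rho)\| \le \frac{\tfrac12\a}{1-\tfrac12\a A}.$$
A direct check (using $\a A/(1+A^2) < 1$ rewritten) shows the right-hand side is $\le A$, so $\mathcal{N}(\mathcal{B}_A)\subset \mathcal{B}_A$. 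Moreover the bound on $\|\mathcal{N}(\rho)\|$ yields the claimed $|\rho(q)|\le \tfrac12\a(1+o(1))$ at the fixed point.

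For the contractivity, the difference telescopes explicitly:
$$\mathcal{N}(\rho_1)(q) - \mathcal{N}(\rho_2)(q) = -\frac{\a^2 h(q-\omega) h(q+\omega)\bigl(\rho_1(q+\omega)-\rho_2(q+\omega)\bigr)}{(1-\a h(q+\omega)\rho_1(q+\omega))(1-\a h(q+\omega)\rho_2(q+\omega))},$$
so that
$$\|\mathcal{N}(\rho_1)-\mathcal{N}(\rho_2)\| \le \frac{\tfrac14\a^2}{(1-\tfrac12\a A)^2}\|\rho_1-\rho_2\|,$$
which is strictly less than $1$ for $\a$ small enough and equals $\tfrac14\a^2(1+o(1))$ as $\a\to 0$. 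The Banach fixed point theorem produces the unique $\rho \in \mathcal{B}_A$. Since $\mathcal{N}$ is jointly analytic in $(\a,\rho)$ on $\{|\a|<\a_A\}\times \mathcal{B}_A$ and the contraction constant depends continuously on $\a$, the Picard iterates converge uniformly on compact subsets of $\{|\a|<\a_A\}$, so the fixed point $\rho=\rho(\cdot;\a)$ is analytic in $\a$ on this disk (this also gives membership in the class $\mathcal{R}$ after tracking the dependence on $q$).

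Part (ii) is entirely parallel: on $\tilde{J}_{-1}$ one has $\Re q \le -\omega$, so $q$, $q-\omega$, $q-2\omega$ stay bounded away from the branch point $q=1$ and the pole $q=0$ of $h$; with the branch of the square root placed in the upper half plane as in Remark \ref{cuts}, $h$ extends continuously to $\tilde{J}_{-1}$ with the same type of bound $|h|\le \tfrac12$. The identical algebraic manipulation applied to
$$\mathcal{M}(\Omega_1)-\mathcal{M}(\Omega_2) = -\frac{\a^2 h(q) h(q-2\omega)(\Omega_1(q-\omega)-\Omega_2(q-\omega))}{(1+\a h(q-2\omega)\Omega_1)(1+\a h(q-2\omega)\Omega_2)}$$
yields self-mappedness and contractivity of $\mathcal{M}$ on $\{\|\Omega\|\le A\}$ for $|\a|<\a_A$.

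The main obstacle is the first paragraph: making the bound $|h(q\pm\omega)|\le \tfrac12$ actually uniform on $J_N$ (and on $\tilde{J}_{-1}$) requires care with the branch of $\sqrt{q-1}$ near the real axis, especially in (ii) where $\Re(q-1)<0$ and the branch cut placement dictates which sign of $\Im\sqrt{q-1}$ occurs. Everything downstream is a mechanical consequence of the two rational-function estimates.
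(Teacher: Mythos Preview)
Your proposal is correct and follows essentially the same route as the paper: bound $\|h\|\le\tfrac12$ on the relevant strip, then verify self-mapping and contractivity of $\mathcal{N}$ (resp.\ $\mathcal{M}$) on the ball of radius $A$ by direct estimation. The paper phrases the contraction estimate via the Fr\'echet derivative $\partial\mathcal{N}/\partial\rho$ rather than your telescoped difference, but these are identical computations; and it obtains analyticity in $\a$ by rerunning the contraction in a space of jointly analytic functions with sup norm in both variables, which is equivalent to your uniform-convergence-of-Picard-iterates argument. Your flagged ``main obstacle'' about the branch of $\sqrt{q-1}$ in part (ii) is exactly the point: with the cut placed in the upper half plane (Remark~\ref{cuts}), one gets $\sqrt{q-1}=-i\sqrt{1-q}$ for $q<1$ real, hence $|h(q)|=\tfrac{1}{2(\sqrt{1-q}+1)}<\tfrac12$ on $\tilde J_{-1}$, and the paper records this simply as $\max_{\tilde J_{-1}}|h|=|h(-\omega)|<\tfrac12$.
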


 Let us state first a more general result, valid for all $\a\in\RR^+$ (where now  the dependence of $\rho$ on $\a$ is made explicit): 

 \begin{Lemma}\label{Llargeq}
 (i)   For any fixed $\a_0$ and large enough $q_0>0$ the operator $\mathcal{N}$ in \eqref{eq:eqconfrac} is contractive in the ball
    $$B=\{\rho\,|\, \sup_{ q\Ge q_0,\ |\a|\Le\a_0}|\rho(q,\a)| \Le \a_0q_0^{-1/2} \}$$
   and thus it has a unique solution, which is analytic in  $(q,\a)$.
   
  Similar estimates hold for \eqref{eq:eqconfracNeg}.
    
 (ii)  For large $q>0$,
    \begin{equation}
      \label{eq:531}
      \rho(|q|;\a)= -\tfrac{1}{2}\a |q|^{-1/2}(1+o(1)), \ \   \Omega
     (-q)=-\tfrac{1}{2}\a q^{-1/2}(1+o(1))\ \ \ (q\to +\infty)
   \end{equation}
 
 (iii) For large $q$ in the lower half-plane,
   \begin{equation}
      \label{eq:532}
      |\rho(q;\a)|= \tfrac{1}{2}\a |\Im q|^{-1/2}(1+o(1)), \ \   |\Omega
     (q)|=\tfrac{1}{2}\a |\Im q|^{-1/2}(1+o(1))\ \ (\Im q\to -\infty)
   \end{equation}
         \end{Lemma}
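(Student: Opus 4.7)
The strategy is a Banach fixed-point argument for $\mathcal{N}$ (and analogously $\mathcal{M}$), followed by one iteration to extract leading asymptotics. The single workhorse estimate is $h(q-\omega) = \tfrac{1}{2}(q-\omega-1)^{-1/2}(1 + O(|q|^{-1/2}))$, which gives $|h(q\pm\omega)| \leq C q_0^{-1/2}$ uniformly in the set of (i) and $|h(q\pm\omega)| \leq C|\Im q|^{-1/2}$ uniformly in the set of (iii); the latter uses that, with the branch of the square root positive on $\RR^+$ continued into the lower half-plane, $|\sqrt{q-\omega-1}| \geq |\Im q|^{1/2}(1-o(1))$ there. The domains $\{\Re q \geq q_0\}$ and $\{\Im q \leq -T\}$ are each closed under $q \mapsto q\pm\omega$ and avoid the pole of $h$ at $0$ and the branch point at $1$.

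For part (i), I would check invariance $\mathcal{N}(B) \subseteq B$ by noting that, for $\rho \in B$ and $q \geq q_0$, $|\alpha h(q+\omega)\rho(q+\omega)| \leq C\alpha_0^2 q_0^{-1} < \tfrac{1}{2}$ once $q_0$ is large, whence $|\mathcal{N}(\rho)(q)| \leq 2\alpha_0 |h(q-\omega)| \leq \alpha_0 q_0^{-1/2}$ after a mild tightening of constants. Contractivity follows from
\[
\mathcal{N}(\rho_1)(q) - \mathcal{N}(\rho_2)(q) = \frac{-\alpha^2 h(q-\omega)h(q+\omega)\bigl(\rho_1(q+\omega)-\rho_2(q+\omega)\bigr)}{\bigl(1 - \alpha h(q+\omega)\rho_1(q+\omega)\bigr)\bigl(1 - \alpha h(q+\omega)\rho_2(q+\omega)\bigr)},
\]
whose prefactor is $O(\alpha_0^2 q_0^{-1})$, giving contraction rate $<1$. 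The unique fixed point is then analytic in $(q,\alpha)$ because the Picard iterates started at $\rho^{(0)} \equiv 0$ are jointly analytic on the open domain and converge uniformly; Weierstrass does the rest. The same proof applies to $\mathcal{M}$ on the corresponding strip for $q \to -\infty$.

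For part (ii), I would feed the a priori bound $|\rho(q+\omega)| \leq \alpha_0 q_0^{-1/2}$ back into the fixed-point equation to get $\rho(q) = -\alpha h(q-\omega)(1 + O(\alpha q^{-1}))$, and then use $h(q-\omega) = \tfrac{1}{2} q^{-1/2}(1+o(1))$ as $q \to +\infty$ to conclude $\rho(q) = -\tfrac{1}{2}\alpha q^{-1/2}(1+o(1))$. The analog for $\Omega$ follows the same way from $\mathcal{M}$.

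For part (iii), I would repeat the contraction argument on $\{\Im q \leq -T\}$ with the $|\Im q|^{-1/2}$ bound on $h$ in place of $q_0^{-1/2}$; invariance, contractivity, and analyticity carry over verbatim, and a single iteration gives $|\rho(q;\alpha)| = \alpha|h(q-\omega)|(1+o(1)) = \tfrac{1}{2}\alpha|\Im q|^{-1/2}(1+o(1))$. The main (and rather mild) obstacle is ensuring that the smallness condition $|\alpha h(q+\omega)\rho(q+\omega)|<1$ is uniform over $|\alpha| \leq \alpha_0$ and over the entire domain, which reduces to choosing $q_0$ (or $T$) large in terms of $\alpha_0$; everything else is bookkeeping.
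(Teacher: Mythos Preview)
Your proposal is correct and takes essentially the same approach as the paper: a contraction-mapping argument using $|h(q\pm\omega)|=O(q^{-1/2})$ (respectively $O(|\Im q|^{-1/2})$), followed by a single iteration to extract the leading asymptotics. The paper's own proof is in fact even terser---it records only the two estimates $\|\mathcal{N}(\rho)\|\le\tfrac12\alpha q^{-1/2}(1+o(1))$ and $\|\partial\mathcal{N}/\partial\rho\|\le\tfrac14\alpha^2 q^{-1}(1+o(1))$---so your writeup simply fills in the routine details.
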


\begin{proof}[Proof of Lemma\,\ref{Lconfrac}.] 

 (i) We note that the minimum of $|\sqrt{q-1}-i|$ in $J_{N}$ is $\sqrt{N\omega}+O(\epsilon)$ implying $\|h\|< \tfrac 12$ for $\epsilon$ small enough.

  We first show that $\mathcal{N}$ leaves the ball $\|\rho\|\Le A$ invariant. A straightforward estimate shows that if $|\a|\Le \a_0$ then $\mathcal{N}$ is well defined on the ball and
$$\|\mathcal{N}(\rho)\|<  A$$
The contractivity factor is obtained by taking the sup of the norm of the Fr\'echet derivative of $\mathcal{N}$ with respect to $\rho$:
\begin{equation}
  \label{eq:difn}
  \left\|\frac{\partial \mathcal{N}}{\partial\rho}\right\|=\left\|\frac{\a^2T^{-1} h Th}{(1-\a T(h\rho))^2}\right\|
\end{equation}
where
\begin{equation}
  \label{eq:deft}
  \ \ \ \ (Tg)(q):=g(q+\omega)
\end{equation}
Under the assumptions in the Lemma, we have
\begin{equation}
  \label{eq:difn2}
  \left\|\frac{\partial \mathcal{N}}{\partial\rho}\right\|\Le   \frac{\a^2\|h\|^2}{(1-\a\|h\|A)^2} <1,\ \text{and}\ \left\|\frac{\partial \mathcal{N}}{\partial\rho}\right\|< \tfrac14 \a^2 (1+o(1))\ \text{as}\ \a\to 0
\end{equation}

The same analysis goes through in the space of functions $\rho(q,\a)$ which are of class $\mathcal{R}$ for $q\in J_N$ and analytic in $\a$ for $|\a|\Le \a_{A}$, in the joint sup norm, in $\a$ and $q$, proving joint analyticity in $\a,q$, except for the mentioned square root branch points. To show that $q=1+n\omega$ are square root branch points, we return to the $\sigma+n\omega$ representation of \eqref{notaq}. For simplicity of presentation assume $\omega>1$. We repeat the arguments above, now in the space of functions of the form $A_1\sqrt{\q-1}+A_2$ where $A_1,A_2$ are analytic near $\q=1$, in the norm $\|A_1\|+\|A_2\|$.

Moreover, since the only singularities of $h(q)$ are a pole of order one at $q=0$ and a square root branch point at $q=1$, a similar analysis shows that $\rho$ is of class $\mathcal{R}$.

Straightforward estimates in \eqref{eq:eqconfrac}  show that for small $\a$ we have
$$\|\rho\|<\tfrac{1}{2}\a(1+o(1))$$

(ii) We note that  $\max_{\tilde{J}_{-1}}|h(q)|=|h(-\omega)|$ hence $\|h\|<\tfrac 12$. The rest of the proof is as for (i).
\end{proof}

\

\begin{proof}[Proof of Lemma\,\ref{Llargeq}.]

 We note that for any $A$ and $\a$, we have, for large enough $q$ 
\begin{equation}
  \label{eq:25}
\| \mathcal{N}(\rho)\| \Le \frac 12 \ \a q^{-1/2}(1+o(1))\ \text{and}\ \left\|\frac{\partial \mathcal{N}}{\partial\rho}\right\|\Le \frac 14\ \a^2 q^{-1}(1+o(1))
\end{equation}

\end{proof}

We use Pringsheim's notation for continued fractions $A+B/(C+D/(E+\cdots))=A + \cFrac{B}{C} + \cFrac{D}{E}+\cdots$
\begin{Corollary}\label{Cconv}
 Iteration of \eqref{eq:eqconfrac} yields a continued fraction
\begin{equation}
    \label{eq:24} \rho(q)=- \cFrac{\a h(q-\omega)}{1}+\cFrac{\a^2\ h(q+\omega)h(q)}{1}+\cFrac{\a^2\ h(q+2\omega)h(q+\omega)}{1}+\cdots \end{equation}
which is convergent for $\Re(q)\Ge N\omega$, $\Im q \in[0,\epsilon]$ and  $\a<2/3$. For small $\a$, the rate of convergence is $4^{-n}\a^{2n}$.

Similarly, iteration of \eqref{eq:eqconfracNeg} yields a convergent continued fraction
  \begin{equation}
   \label{eq:24O}
\Omega (q)=  \cFrac{\a h(q)}{1}+    \cFrac{\a^2 h(q-\omega)h(q-2\omega)}{1}+\cFrac{\a^2\ h(q-2\omega)h(q-3\omega)}{1}+\cdots
 \end{equation}
\end{Corollary}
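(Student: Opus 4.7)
The plan is to derive \eqref{eq:24} by formally iterating the fixed point equation \eqref{eq:eqconfrac} and invoking Lemma \ref{Lconfrac} to pass to the limit; the argument for \eqref{eq:24O} is entirely parallel, using the companion statement for $\mathcal{M}$ on $\mathcal{C}(\tilde{J}_{-1})$.

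First I would introduce the Picard iterates $\rho^{(n)} := \mathcal{N}^n(0)$ starting from the zero function. Substituting $\rho(q+\omega)=\mathcal{N}(\rho)(q+\omega)$ into $\rho(q)=\mathcal{N}(\rho)(q)$ and repeating $n$ times produces a nested expression whose successive numerators are $\a h(q-\omega)$ at the top, followed by $\a^2 h(q+\omega)h(q)$, $\a^2 h(q+2\omega)h(q+\omega)$, \dots, $\a^2 h(q+n\omega)h(q+(n-1)\omega)$, with innermost denominator equal to $1-\a h(q+(n+1)\omega)\rho(q+(n+1)\omega)$. Replacing this innermost $\rho$-dependent tail by $1$ gives precisely $\rho^{(n)}$, which I would identify with the $n$-th convergent of the continued fraction on the right side of \eqref{eq:24} in Pringsheim's notation.

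Next I would apply Lemma \ref{Lconfrac}(i): for a suitable choice of $A$ and for $\Re q\Ge N\omega$ the operator $\mathcal{N}$ is a strict contraction on the closed ball of radius $A$ in $\mathcal{C}(J_N)$, with contraction constant bounded by $K := \a^2\|h\|^2/(1-\a\|h\|A)^2$, which is $\tfrac14\a^2(1+o(1))$ as $\a\to 0$ by \eqref{eq:difn2}. The Banach contraction principle, applied with starting point $0$, yields
$$\|\rho-\rho^{(n)}\|_{\mathcal{C}(J_N)}\Le \frac{K^n}{1-K}\,\|\mathcal{N}(0)\|,$$
which proves convergence of the continued fraction at the claimed rate $4^{-n}\a^{2n}$. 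The range $\a<2/3$ is covered by choosing $N$ large enough that $\|h\|$ is sufficiently small on $J_N$ and then choosing $A$ so that both the invariance condition $\a\|h\|(1+A^2)\Le A$ and the contractivity condition $\a\|h\|(1+A)<1$ are simultaneously satisfied.

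The main subtlety is purely combinatorial rather than analytic: one must track carefully the shift pattern in the iterated substitution to confirm that the $k$-th level numerator in the nested fraction really is $\a^2 h(q+k\omega)h(q+(k-1)\omega)$ (and not some neighboring shift), so that $\rho^{(n)}$ matches the $n$-th Pringsheim convergent of \eqref{eq:24} exactly. Once that bookkeeping is settled, the contraction estimate from Lemma \ref{Lconfrac} supplies both convergence and the geometric rate, and no further delicate argument is needed.
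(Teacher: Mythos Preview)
Your proposal is correct and follows essentially the same route as the paper: identify the $n$-th convergent of the continued fraction with the Picard iterate $\mathcal{N}^{\circ n}(0)$, then invoke the contractivity established in Lemma~\ref{Lconfrac} (with Fr\'echet derivative bound $\tfrac14\a^2(1+o(1))$) to obtain both convergence and the rate $4^{-n}\a^{2n}$. Your write-up is somewhat more explicit about the bookkeeping and the Banach-principle error estimate, but the underlying argument is identical to the paper's.
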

 \begin{proof}
   Convergence of the continued fraction, by definition, means that the $n-$th truncate of the continued fraction, that is  $\mathcal{N}^{\circ n}(0)$,  converges to the fixed point $\rho=\mathcal{N}(\rho)$. Since zero is in the domain of contractivity of $\mathcal{N}$, convergence follows directly from Lemma \ref{Lconfrac}. The norm of the Fr\'echet derivative of   $\mathcal{N}^{\circ n}$ is $(\tfrac14 \a^2)^n(1+o(1))$ implying the last statement. Convergence of \eqref{eq:24O} is similar.
    \end{proof}
    
    \begin{Corollary}\label{Cgenrec}
   As $n\to -\infty$, there is a solution of \eqref{Kdesigma} with $f=0$ which is $O(2^{-|n|}\a^{|n|} |n|^{-|n|/2})$; a second,  linearly independent solution, has the property $1/g_n=O(2^{-|n|}\a^{|n|} |n|^{-|n|/2}))$, that is, such a solution grows factorially. A similar statement holds as $n\to \infty$. 
 \end{Corollary}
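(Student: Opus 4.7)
The plan is to construct the two solutions that decay (respectively at $+\infty$ and $-\infty$) from the continued-fraction fixed points $\rho$ and $\Omega$ of \S\ref{PropHomEq}, then obtain their linearly independent partners by reduction of order. Fix $\sigma$ and write $q_n=\sigma+n\omega$.

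First, the $+\infty$ decaying solution. By Lemma \ref{Llargeq}(i) the fixed point $\rho$ of \eqref{eq:eqconfrac} exists for $\Re q\ge q_0$, and the prescription $g_n^{(+)}/g_{n-1}^{(+)}=\rho(q_n)$ defines a nonzero solution of the homogeneous equation for $n\ge N_0$, which I extend to all $n\in\ZZ$ by running the two-sided recurrence \eqref{Kdesigma} with $f=0$. By the asymptotic in \eqref{eq:531}, $\rho(q_k)\sim -\tfrac12\a(k\omega)^{-1/2}$ as $k\to+\infty$, so
\begin{equation*}
|g_n^{(+)}|\;\sim\;C\,\frac{\a^n}{2^n\,\omega^{n/2}\sqrt{n!}}\;=\;O(2^{-n}\a^n n^{-n/2}),
\end{equation*}
where the final equality is on the factorial scale: $(n!)^{-1/2}$ and $n^{-n/2}$ differ by an $e^{n/2}n^{-1/4}$ factor, absorbed into the implicit constant by Stirling.

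For the linearly independent companion at $+\infty$, I would use reduction of order: setting $g_n=g_n^{(+)}u_n$ in the homogeneous equation and subtracting the equation satisfied by $g^{(+)}$ produces, after elementary cancellation, the telescoping first-order relation
\begin{equation*}
v_n\;=\;-\,\frac{h_{n-1}\,g_{n-1}^{(+)}}{h_{n+1}\,g_{n+1}^{(+)}}\,v_{n-1},\qquad v_n:=u_{n+1}-u_n.
\end{equation*}
Since $g_{k-1}^{(+)}/g_{k+1}^{(+)}=1/(\rho(q_k)\rho(q_{k+1}))\sim 4k\omega/\a^2$ and $h_{k-1}/h_{k+1}\to 1$, the product gives $|v_n|\sim C(4\omega/\a^2)^n\,n!$, hence $u_n=u_0+\sum_{k<n}v_k\sim v_{n-1}$ grows factorially. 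Consequently
\begin{equation*}
g_n^{(-)}:=g_n^{(+)}u_n\;\text{ satisfies }\;1/g_n^{(-)}=O(2^{-n}\a^n n^{-n/2}),
\end{equation*}
which is the claimed factorial growth; linear independence with $g^{(+)}$ is immediate from the incompatible asymptotic orders.

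The $n\to-\infty$ statements follow from the same two-step argument, now using the fixed point $\Omega$ of \eqref{eq:eqconfracNeg} (in place of $\rho$) together with the second asymptotic in \eqref{eq:531} to produce a solution with the analogous factorial decay at $-\infty$, and then reduction of order to produce its growing companion. I do not anticipate a serious obstacle: all quantitative input is supplied by $h_n\sim(2\sqrt{|n|\omega})^{-1}$ and by the fixed-point asymptotics of Lemma \ref{Llargeq}; the only bookkeeping point is the identification of the factorial decay $(|n|!)^{-1/2}$ with $|n|^{-|n|/2}$ on the factorial scale via Stirling.
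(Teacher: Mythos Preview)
Your proof is correct and follows exactly the paper's two-line sketch: build the decaying solution from the continued-fraction ratio $\rho$ (resp.\ $\Omega$) via Lemma~\ref{Llargeq}, then produce the growing companion by reduction of order $g_n=g_n^{(+)}u_n$, which yields the first-order recurrence for $v_n=u_{n+1}-u_n$ that you write down. One small caveat: your remark that the Stirling factor $e^{n/2}n^{-1/4}$ is ``absorbed into the implicit constant'' is not literally correct as a big-$O$ statement, but the $|n|^{-|n|/2}$ in the corollary is itself loose shorthand for $(|n|!)^{-1/2}$ (cf.\ the paragraph introducing \S\ref{PropHomEq}), so this imprecision originates in the statement, not in your argument.
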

 \begin{proof}
   The first part follows from the fact that $g_{n-1}/g_n=\Omega_n$. For the second part, one looks as usual for a second solution in the form $h_n=g_nu_n$ and notes that $u_n$ satisfies a first order recurrence relation that can be solved in closed form in terms of $g_n$.

 \end{proof}

  \subsection{Proof of Theorem\,\ref{T4}\,(i): location of the singularities for small $\a$}\label{polsmalr} 
  
  By Proposition\,\ref{Prop2} the resolvent can only be singular if $\Im \sigma<0$, which we will assume henceforth. By Remark\,\ref{KK0} we can then work with the simpler operator $K_0$. We place branch cuts in the upper half plane, see Remark\,\ref{cuts}. 
 
  \begin{Theorem}\label{L3} There is a $\delta>0$ such that for all complex $\a$ with $|\a|<\delta$ the following hold.
  
 (i) There exists a unique $\sigma=\q_0(\a)$ in the strip $\Re\sigma\in[0,\omega)$ so that $\mathop{Ker}(I-K(\sigma,\a))\ne\{0\}$. 

 More precisely, $q_0(\a)=\a^2 s_0(1+\a f(\a))$ where
\begin{equation}\label{vals0}
 s_0=-\frac {\sqrt {1+\omega}+i\sqrt {\omega-1}}{2\omega}
\end{equation}
for some $f$ analytic at zero.

(ii) For $\omega\in(\frac 1m,\frac 1{m-1}),\, m-1\in\NN$ we have\footnote{ Note that for $\omega<1$ we have $s_0=-(2\omega)^{-1}(\sqrt {1+\omega}-\sqrt {1-\omega})\in\RR$.}
\begin{equation}\label{valsomincm}
q_0(\a)=\a^2 s_0\, (1+o_a)+i\a^{2m}\xi_0\, (1+o_a)
\end{equation}
where $s_0$ is real, given by \eqref{vals0}, and $\xi_0$ is real, given by 
\begin{equation}\label{valxi0}
 \xi_0= -\frac{1}{m\omega}\frac{\sqrt{m\omega-1}}{\prod_{1\Le k\Le m-1}(1-\sqrt{1-k\omega})^2}\frac{1}{2^{2m+1}} 
\end{equation}

\end{Theorem}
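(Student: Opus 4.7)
The plan is to reduce the pole-finding problem to a scalar characteristic equation and analyze it perturbatively in $\alpha$. By compactness of $K(\sigma,\alpha)$ and the Fredholm alternative, the condition $\mathop{Ker}(I-K(\sigma,\alpha))\neq\{0\}$ is equivalent to the homogeneous recurrence $A_n = \alpha h_{n+1} A_{n+1} - \alpha h_{n-1} A_{n-1}$ admitting a non-trivial $\ell^2(\ZZ)$ solution. By Corollary \ref{Cgenrec} this second-order recurrence has a unique-up-to-scalar factorially-decaying solution at $n\to+\infty$, with consecutive ratio $A_n/A_{n-1}=\rho^+(q_n)$ given by the contractive fixed point of $\mathcal{N}$ from Corollary \ref{Cconv}, and similarly one at $n\to-\infty$ with $A_{n-1}/A_n=\Omega^-(q_n)$. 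An $\ell^2$ solution exists precisely when these two fundamental solutions coincide, giving the scalar characteristic equation
\[
F(\sigma,\alpha):=\rho^+(\sigma;\alpha)\,\Omega^-(\sigma;\alpha)-1=0,
\]
where $\rho^+,\Omega^-$ are understood as the analytic continuations from their large-$|q|$ domain of convergence (Lemmas \ref{Lconfrac}, \ref{Llargeq}) down to $q=\sigma$, propagated by the fixed-point equations across the real axis and staying below the cuts placed in the upper half-plane per Remark \ref{cuts}.

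For case (i), $\omega>1$ and small $\alpha$, I would iterate the fixed-point equations once to get
$\rho^+(q)=-\alpha h(q-\omega)/(1+\alpha^2 h(q)h(q+\omega)+O(\alpha^4))$ and $\Omega^-(q)=\alpha h(q)/(1+\alpha^2 h(q-\omega)h(q-2\omega)+O(\alpha^4))$. Substituting $q=\sigma$ and expanding, the leading part of $F=0$ reduces to a linear equation in $\sigma$; solving it using the explicit values of $h(\pm\omega)$ on the appropriate Riemann sheets yields $q_0=\alpha^2 s_0+O(\alpha^3)$ with $s_0$ as in \eqref{vals0}. The specific combination $\sqrt{\omega+1}+i\sqrt{\omega-1}$ records the square-root values $\sqrt{-\omega-1}=-i\sqrt{\omega+1}$ and $\sqrt{\omega-1}$ that enter through $h(-\omega)$ and $h(\omega)$. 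The analytic correction $(1+\alpha f(\alpha))$ follows by iterating to all orders; uniqueness and analyticity of $q_0(\alpha)$ in the strip follow from the analytic implicit function theorem applied in the rescaled variable $\tau=\sigma/\alpha^2$ (verifying $\partial_\tau F|_{(\tau,\alpha)=(s_0,0)}\neq 0$), together with Rouch\'e's theorem to rule out other zeros.

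For case (ii), $\omega\in(1/m,1/(m-1))$ with $m\Ge 2$, the leading $\alpha^2$ analysis yields a \emph{real} $s_0$ (since $i\sqrt{\omega-1}=-\sqrt{1-\omega}$ on the relevant branch, so its contribution migrates to the real part), meaning there is no ionization rate at this order. One must therefore carry the continued-fraction iteration $m$ further steps, until the first argument $k\omega-1$ becomes positive (at $k=m$) and produces a genuinely real $\sqrt{m\omega-1}$; the $m-1$ intermediate iterations $k=1,\ldots,m-1$ contribute denominator factors $(1-\sqrt{1-k\omega})$, which appear squared (once from $\rho^+$, once from $\Omega^-$) in the product $\rho^+\Omega^-$, giving precisely the coefficient structure $\prod_{1\le k\le m-1}(1-\sqrt{1-k\omega})^{-2}$ and the overall $1/(m\omega\cdot 2^{2m+1})$ prefactor of \eqref{valxi0}. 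The main obstacle is the branch-tracking itself: establishing, along the continuation of $\rho^+$ and $\Omega^-$, which sheet of $\sqrt{q-1}$ is picked up at each of the $q=\sigma+k\omega$ encountered, and keeping this consistent across $m$ iterations in case (ii); the transition regime of Note \ref{Note5}, where the two scales $\alpha^{2m}$ and $\alpha^{2(m+1)}$ merge, must be explicitly excluded.
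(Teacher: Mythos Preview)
Your approach is essentially the paper's: reduce to the scalar matching condition $\rho(\sigma,\alpha)\Omega(\sigma,\alpha)=1$ (the paper's \eqref{eq:root}), rescale $\sigma=\alpha^2 s$, expand the continued fractions, and for (ii) iterate $m$ steps until the first genuinely real $\sqrt{m\omega-1}$ appears. Two points need correction.

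First, your localization step is underspecified. Rouch\'e in the rescaled variable $\tau=\sigma/\alpha^2$ only gives local uniqueness near $s_0$; to exclude zeros elsewhere in the strip you need a global statement. The paper obtains this as Lemma~\ref{Lsmallr}: writing $K_0=\alpha K_1$, one has $\|\alpha^2 K_1^2\|<1$ whenever $\mathrm{dist}(\sigma,\omega\ZZ)\ge C\alpha^2$, so any kernel element must sit within $O(\alpha^2)$ of the lattice. This is the missing ingredient behind your Rouch\'e claim.

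Second, your mechanism for the squared factors in \eqref{valxi0} is wrong. The $\Omega$ side iterates through $h(\sigma),h(\sigma-\omega),h(\sigma-2\omega),\ldots$, which produce factors of the form $(\sqrt{1+k\omega}\pm 1)^{-1}$, not $(1-\sqrt{1-k\omega})^{-1}$; it contributes nothing to the product in \eqref{valxi0}. The squares arise entirely on the $\rho$ side, from the telescoping structure of the continued-fraction coefficients $t_k=h(\sigma+k\omega)h(\sigma+(k-1)\omega)$: in the product $t_2\cdots t_{m-1}q_m$ each intermediate $h(k\omega)$, $1\le k\le m-1$, appears twice (with the boundary terms supplied by $x_1$ and $q_m$). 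If you redo the bookkeeping tracking only the $\rho$ branch you will recover \eqref{valxi0}; as written, your sketch would not.
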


{\bf Remark} Since $p=-iq$,  the poles in the $p$-plane are at $-iq_0(\a)+i\omega\ZZ$, hence Theorem\,\ref{L3} completes the proof  of Theorem\,\ref{T4}\,(i).

For the proof of Theorem\,\ref{L3} we first show, in Lemma\,\ref{Lsmallr}, that any singularities of $(I-K(\sigma,\a))^{-1}$ are $O(\alpha^2)$ distance to $ \omega\ZZ$, if $\a$ is small enough. Then location of the poles is found by series expansions in $\a$.

\begin{Lemma}\label{Lsmallr}
There is a $C(\omega)>0$ such that for $|\a|$ small enough equation \eqref{eq:rec201} has a unique
solution for any $ \sigma$ in the strip $\Re\sigma\in[0,\omega)$ with 
dist$(\sigma,\{0,\omega\})\Ge C(\omega)\a^{2}$. In particular, for such $ \sigma$, {\em Ker}$(I-K(\sigma,\a))=\{0\}$.

\end{Lemma}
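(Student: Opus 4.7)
The plan is to apply a Lyapunov--Schmidt (Schur complement) reduction onto the finite-dimensional subspace where $I - K(\sigma,\a)$ degenerates as $\sigma\to 0$.

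\emph{Setup and reduction to $\sigma$ near $0$.} First I write $I - K(\sigma,\a) = B(\sigma) - \a M(\sigma)$, where $B$ is the diagonal operator with entries $b_n(\sigma)$ and $(Mg)_n = c_n^+ g_{n+1} + c_n^- g_{n-1}$ collects the off-diagonal shifts from \eqref{eq:rec201}. Since the only singularity of $h(\q)$ in the closed lower half-plane is the pole at $\q=0$, the diagonal entries $b_n(\sigma)$ vanish in the strip $\Re\sigma\in[0,\omega)$ only at $\sigma=0$, where $b_{\pm 1}(0)=0$ (because $h_0$ diverges there). A short expansion using $\sqrt{\sigma-1}-i = -i\sigma/2 + O(\sigma^2)$ shows $b_{\pm 1}(\sigma) = \gamma_\pm(\omega)\,\sigma + O(\sigma^2)$ with $\gamma_\pm(\omega)\ne 0$; the remaining entries $b_n$ stay uniformly bounded away from zero for $\sigma$ in a small disc around $0$ (using the asymptotics $b_n \to 1$ as $|n|\to\infty$). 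The index shift $(\sigma,n)\to(\sigma+\omega,n-1)$ preserves the recursion \eqref{Kdesigma}, so the case $\sigma$ near $\omega$ reduces to $\sigma$ near $0$, and I focus on the latter.

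\emph{Inversion on the complement of the bad subspace.} Let $P$ be the orthogonal projection onto $E = \mathrm{span}\{e_1, e_{-1}\}$ and $P^\perp = I - P$. On $E^\perp$, the diagonal part $P^\perp B P^\perp$ is bounded below by some $c_0(\omega) > 0$ uniformly for $\sigma$ in a small disc about $0$. A Neumann series then shows that $P^\perp(B - \a M)P^\perp$ is invertible on $E^\perp$ for $|\a| \Le c_0(\omega)/(2\|M\|)$, with norm of inverse $\Le 2/c_0(\omega)$. By the standard Schur complement identity, $I - K(\sigma,\a)$ is invertible on $\ell^2(\ZZ)$ iff the $2\times 2$ reduced operator
\[
M_{\mathrm{eff}}(\sigma,\a) := P(B-\a M)P - \a^2 PMP^\perp \bigl[P^\perp(B-\a M)P^\perp\bigr]^{-1} P^\perp MP
\]
is invertible.

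\emph{Determinant of the Schur complement.} Since $g_1$ and $g_{-1}$ do not appear in each other's rows in $M$, the $O(\a)$ term $PMP$ vanishes, and in the $\{e_1,e_{-1}\}$ basis
\[
M_{\mathrm{eff}}(\sigma,\a) = \begin{pmatrix} b_1(\sigma) & 0 \\ 0 & b_{-1}(\sigma) \end{pmatrix} + \a^2 N(\sigma,\a),
\]
with the entries of $N$ uniformly bounded (the off-diagonal ones arise from the single intermediate step $g_1\leftrightarrow g_0\leftrightarrow g_{-1}$ through row $0$ and are proportional to $c_{\pm 1}^{\mp}\, c_0^{\pm}/b_0$). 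Expanding,
\[
\det M_{\mathrm{eff}}(\sigma,\a) = \gamma_+\gamma_-\,\sigma^2 + O(\a^2\sigma) + O(\a^4).
\]
Choosing $C(\omega)$ large enough — depending on $\gamma_\pm(\omega)$ and the implicit constants — ensures that for $|\sigma|\Ge C(\omega)\a^2$ the leading $\sigma^2$ term dominates, giving $\det M_{\mathrm{eff}}\ne 0$. Combined with invertibility of $P^\perp(B-\a M)P^\perp$, this yields invertibility of $I - K(\sigma,\a)$, hence uniqueness of the solution of \eqref{eq:rec201} and trivial kernel.

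\emph{Main obstacle.} The delicate point is the sharp bookkeeping of the determinant right at the threshold $|\sigma|\sim \a^2$, where the leading $\sigma^2$ contribution and the cross error $\a^2\sigma$ are \emph{both} of order $\a^4$. One must therefore compute the $\a^2$-corrections in $N(\sigma,\a)$ to explicit, $\sigma$-dependent leading order and choose $C(\omega)$ large enough so that $\gamma_+\gamma_-$ wins against the combined constant produced by the cross term. This asymmetry is precisely why the excluded zone has radius $\a^2$ (not $\a$), matching the scale of the pole $p_0(\a) = O(\a^2)$ in Theorem~\ref{L3}.
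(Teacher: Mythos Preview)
Your Schur-complement argument is correct and does recover the $\alpha^2$ scale, but it is considerably more elaborate than the paper's route. The paper first passes from $K$ to the unregularized $K_0$ via Remark~\ref{KK0} (valid since one may assume $\Im\sigma<0$), writes $K_0=\a K_1$, and observes that any $u$ in the kernel satisfies $u=\a^2K_1^2u$; the tridiagonal structure of $K_1$ then implies that every entry of $K_1^2$ is a product $h_jh_k$ with $j\ne k$, so the singular factor $h_0\sim 1/\sigma$ appears at most \emph{linearly}, giving $\|K_1^2\|=O(1/|\sigma|)$ and hence contractivity once $|\sigma|>C\a^2$. Your Lyapunov--Schmidt reduction encodes the same mechanism---$PMP=0$ is precisely the statement that the bad indices $\pm1$ are not directly coupled, so the singular coupling enters only at order $\a^2$---but at the cost of a full block decomposition and a determinant computation. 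Two small points: you should state explicitly that for $\sigma$ in the strip bounded away from $\{0,\omega\}$ the diagonal $B$ is uniformly invertible and a plain Neumann series finishes the job (you only treat the small discs); and the ``main obstacle'' you flag is not a genuine obstacle---a uniform bound $|N|\le C_N(\omega)$ suffices, since $|\gamma_+\gamma_-|C^2$ beats $C_NC+C_N^2$ for $C$ large, with no need to compute $N$ to leading order.
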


\begin{proof} As mentioned above, we can additionally assume that $\Im \sigma<0$, 
hence invertibility of $I-K( \sigma,\a)$ is equivalent to
  Ker$(I-K( \sigma,\a))=\{0\}$, which is equivalent to  Ker$(I-K_0( \sigma,\a))=\{0\}$.

Let $K_0( \sigma,\a)=\a K_1 ( \sigma)$ (where now $K_1$ does not depend on $\a$). If $u$ is such
  that $(I-K_0( \sigma,\a))u=0$, then $u=K_0u=\a K_1u= \a^2K_1^2u$. But straightforward
  estimates show that $\|\a^2K_1^2\|<1/2$, if $C$ is large enough and 
  dist$( \sigma,\{0,\omega\})>C\a^{2}$. This implies that $\a^2K_1^2$
  is contractive and thus $u=0$.
\end{proof}

  \begin{Proposition}\label{rhomero}
        The functions $\rho(q,\a),\, \Omega(q,\a)$ are meromorphic in $(q,\a)$ for $q$ in the open lower half plane and $\a\in\CC$.
        \end{Proposition}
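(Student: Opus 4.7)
The plan is to first establish joint analyticity of $\rho(q,\a)$ in a ``far'' region of the lower half plane, where $\mathcal{N}$ is manifestly contractive, and then to propagate this meromorphically to the entire $\HHL$ by iterating the functional equation~\eqref{eq:eqconfrac}. The argument for $\Omega$ is entirely analogous, starting from the left neighborhood $\tilde{J}_{-1}$ of Lemma~\ref{Lconfrac}(ii) and iterating the equation for $\mathcal{M}$.

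For the first step I would fix $\a_0>0$ and argue, by a straightforward extension of Lemma~\ref{Llargeq} from the real axis into the lower half plane, that $\mathcal{N}$ is a uniform contraction on a ball of small bounded holomorphic functions on $\{q\in\overline{\HHL}:|q|>N\}$, for all $|\a|<\a_0$, provided $N=N(\a_0)$ is sufficiently large. The crucial inputs are that $h$ is meromorphic in $\HHL$ with its only pole at $q=0$ (branch cuts placed in the upper half plane by Remark~\ref{cuts}) and that $|h(q)|=O(|q|^{-1/2})$; together these make the contraction factor $\a^2\|h\|^2/(1-\a\|h\|A)^2$ arbitrarily small on the far region, so the fixed point $\rho$ is jointly holomorphic in $(q,\a)$ on this initial domain.

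For the second step I would iterate the equation
$$\rho(q,\a)=-\frac{\a\,h(q-\om)}{1-\a\,h(q+\om)\,\rho(q+\om,\a)}$$
to propagate. If $\rho$ is known to be meromorphic on a domain $U\subset\HHL$, this formula defines a meromorphic extension to $(U-\om)\cap\HHL$: the numerator $-\a h(q-\om)$ and the denominator $1-\a h(q+\om)\rho(q+\om,\a)$ are both meromorphic in $(q,\a)$, and the denominator is not identically zero since at $\a=0$ it equals $1$. Consistency on the overlap is automatic, because old and new definitions both satisfy the fixed-point equation for $\mathcal{N}$, uniquely soluble in the contraction region. After $k$ iterations the domain contains $\{q\in\HHL:|q+j\om|>N\text{ for some }0\Le j\Le k\}$; since for any fixed $q\in\HHL$ the shifted point $q+k\om$ eventually enters the initial domain, the union over $k$ exhausts $\HHL$. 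Letting $\a_0\to\infty$ then yields joint meromorphy on $\HHL\times\CC$.

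The main obstacle I anticipate is making the joint (two-variable) meromorphy rigorous across the iteration. While meromorphy in one variable is immediate, in several variables one must rule out pathological indeterminacy loci. The point that prevents them here is that the denominator of $\mathcal{N}[\rho]$ is meromorphic and not identically zero in $(q,\a)$ (as one sees from $\a=0$), so the quotient is genuinely meromorphic by standard multivariable complex analysis, and the pole set remains an analytic hypersurface of codimension one.
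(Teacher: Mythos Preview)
Your proposal is correct and follows essentially the same route as the paper: obtain joint analyticity of $\rho$ in a far region via the contraction in Lemma~\ref{Llargeq}, then propagate meromorphically through $\HHL$ by iterating the functional relation~\eqref{eq:eqconfrac} (and analogously for $\Omega$ via~\eqref{eq:eqconfracNeg}). The paper's argument is terser---it simply notes that the coefficients in~\eqref{eq:eqconfrac} are meromorphic and invokes the recurrence to continue $\rho$ to smaller $\Re q$---while you spell out the two-variable meromorphy concern and the consistency on overlaps; these are legitimate points but not a different strategy.
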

        
        \begin{proof} 
        
          Let $\a$ be fixed and $\rho=\rho(q;\a)$ be the fixed point provided by Lemma \ref{Llargeq}, analytic in $(q,\a)$ for $q>q_0$. Using the recurrence relation \eqref{eq:eqconfrac} $\rho$ can be continued to a meromorphic function for all $q$ with smaller real part (the coefficients in \eqref{eq:eqconfrac} are meromorphic except for square root branch points on the real line). 
          
          Similarly, $\Omega$ can be continued to a meromorphic function.
        \end{proof}

\begin{proof}[Proof of Theorem\,\ref{L3} (i)]
 Let $\rho,\ \Omega$ be given by Proposition\,\ref{rhomero}. The value(s) of $\sigma=\sigma(\a)$ for which Ker$[I-K(\sigma,\a)]\ne\{0\}$ are those for which 
\begin{equation}
  \label{eq:root}
  \rho(\sigma,\a)=\frac{1}{\Omega(\sigma,\a)}
\end{equation}
as discussed in \S\ref{PropHomEq}.
By Lemma\,\ref{Lsmallr} any such $\sigma$ has the form $\sigma=\a^2 s$ with $|s|< C(\omega)$. We now show that for small $\a$ there is exactly one solution of \eqref{eq:root} in the strip $\Re \sigma\in [0,1-\lfloor\omega^{-1}\rfloor\omega)$. (Note that $0<1-\lfloor\omega^{-1}\rfloor\omega<\omega$ using the assumption that $\omega^{-1}\not\in\ZZ$ which ensures that the poles and the branch points do not coincide.)

Expanding $\rho$, respectively $\Omega$, in a power series in $\a$ we have
\begin{equation}\label{match}
\rho(\a^2s,\a)=\frac{i\a }2\,\frac 1{\sqrt{\omega+1}-1}\,\frac 1{1-\frac 1{2s}\,\frac 1{i\sqrt{\omega-1}+1}} (1+O(\a^2)),\ \ \ \ \ \text{and}\ \ \ \ \ \Omega(\a^2s,\a)^{-1}=-i\a s (1+O(\a^2))
\end{equation}
Let $F(r,s):= \rho(\a^2s)-\Omega(\a^2s)^{-1}$. It can be checked that $F$ is analytic in $\{(s,\a)| |s|<C(\omega),|\a|<\delta\} $ for small enough $\delta$.
Equating the dominant terms in \eqref{match} we obtain that $F$ has only one simple zero, of the form \eqref{vals0}.

 Finally, note that if equation \eqref{eq:root} had a solution of the form $\sigma=\omega-\a^2 s_1$ then $\sigma=-\a^2 s_1$ would also be a solution, but this is ruled out by the uniqueness of $O(\a^2)$  solutions.
\end{proof}

\begin{proof}[Proof of Theorem\,\ref{L3} (ii)]

As in the proof of Corollary\,\ref{Cconv}, the $n-$th truncate of the continued fraction defining $\rho$, that is  
$\mathcal{N}^{\circ n}(0)$,  converges to the fixed point $\rho=\mathcal{N}(\rho)$. Similarly, $\mathcal{M}^{\circ n}(0)$, converges to the fixed point $\Omega=\mathcal{M}(\Omega)$. 

With the notation
$$x_k=\a^2\,t_k,\ \ \ \ \ t_k=h(\sigma+k\omega)\, h\left(\sigma+(k-1)\omega\right),\ \ \ \ \  k\in\ZZ$$
we obtain from \eqref{eq:24},\ \eqref{eq:24O}
 \begin{align}\label{apprrho1}
\mathcal{N}^{\circ (m+1)}(0)= -\ \cFrac{\a h(\sigma-\omega)}{1}+\cFrac{x_1}{1}+\cFrac{\a^2t_2}{ 1}\cdots+\cFrac{\a^2 t_m}{1},\nonumber\\
 \mathcal{M}^{\circ (m+1)}(0)=  \cFrac{\a h(\sigma)}{1}+\cFrac{\a^2t_{-1}}{1}+\cFrac{\a^2t_{-2}}{ 1}\cdots+\cFrac{\a^2t_{-m} }{1}
\end{align}

Note that, for small $\a$, $x_1=O(1)$ and all other $x_k=O(\a^2)$ and, inductively, $\mathcal{N}^{\circ (k+1)}(0)=\mathcal{N}^{\circ (k)}(0)+O(\a^{2k-1})$ and $\mathcal{M}^{\circ (k+1)}(0)=\mathcal{M}^{\circ (k)}(0)+O(\a^{2k+1})$. Therefore
$$\rho(\sigma)=\mathcal{N}^{\circ (m+1)}(0)+O(\a^{2m+1}),\ \ \Omega(\sigma)=\mathcal{M}^{\circ (m+1)}(0)+O(\a^{2m+3})$$

Noting that
$$h(\sigma+k\omega)=\frac 1{2i}\, \frac 1{\sqrt{1-\sigma-k\omega}-1}\ \ \text{for }k<m$$
it follows that all $t_k$ with $k<m$ have a power series expansion in $\a$ with real coefficients
for $\omega\in (\frac 1m,\frac 1{m-1})$. Then so does the denominator in the right of \eqref{apprrho1}, as well as the denominator on the left,  truncated to $m-1$ terms. 

We now look for a solution to
\begin{equation}\label{eqappr}
1=\rho(\sigma)\Omega(\sigma)=\mathcal{N}^{\circ (m+1)}(0)\mathcal{M}^{\circ (m+1)}(0)(1+O(\a^{2m+1})) 
\end{equation}
in the form $\sigma=\a^2s+O(\a^{2m+2})$ where $s=s_0+\a^2s_1+\ldots+i\a ^{2m-2}\xi$ with $\xi$ and all $s_k$ real.

A simple calculation gives
 \begin{equation}\label{valqm}
 t_m=\mathcal{P}+iq_m+O(\a^2),\ \ \text{where } q_m=-\frac{\sqrt{m\omega-1}}{4m\omega}\,\frac 1{\sqrt{1-(m-1)\omega}-1} ,\  \ \mathcal{P}\in\RR
 \end{equation}
which implies
$$\frac{\a^2t_{m-1}}{1+\a^2t_m}=\mathcal{P}-i\a^4t_{m-1}q_{m}+O(\a^6)$$
where here, and in the following, $\mathcal{P}$ denotes quantities that depend polynomially on $\a$ and have real coefficients. Then, inductively, we obtain that 
\begin{multline}\label{valQ}
\cFrac{1}{1}+\cFrac{\a^2t_2}{1}+\cdots \cFrac{\a^2t_m}{1}=1+\a^2 \mathcal{P}+i (-1)^{m-1}\a^{2m-2}t_2...t_{m-1}q_{m}+O(\a^{2m})\\
:=1+\a^2\mathcal{P}+i\a^{2m-2}Q+O(\a^{2m})
\end{multline}
Equation \eqref{eqappr} becomes
\begin{equation}\label{eqappr2}
\cfrac{-\a h(\a^2s-\omega)}{1+x_1(1+\a^2\mathcal{P}+i\a^{2m-2}Q+O(\a^{2m}))} \, \frac{\a h(\a^2s)}{1+\a^2\mathcal{P}}=1+O(\a^{2m+1})
\end{equation}
We have 
\begin{equation}\label{forma1}
-\a^2  h(\a^2s)h(\a^2s-\omega)= -\frac{1}{2s}\frac 1{\sqrt{1+\omega}-1}    +O(\a^2)
\end{equation}
\begin{equation}\label{forma2}
x_1= \a^2  h(\a^2s)h(a^2s+\omega)=  \frac{1}{2s}\frac 1{\sqrt{1-\omega}-1}  +O(\a^2)
\end{equation}
Using \eqref{forma1},  \eqref{forma2} in \eqref{eqappr2} and equating the coefficient of $\a^0$ we obtain \eqref{vals0}.

Rewriting  \eqref{eqappr2} using \eqref{forma1},  \eqref{forma1} as
$$\frac{ \frac 1{\sqrt{1+\omega}-1}    +sO(\a^2)}{ 2s+(\frac 1{\sqrt{1-\omega}-1}  +sO(\a^2))  (1+\a^2\mathcal{P}+i\a^{2m-2}Q+O(\a^{2m}))   }\frac{1}{1+\a^2\mathcal{P}}  =1+O(\a^{2m+1}) $$
and equating the dominant imaginary terms, of order $O(\a^{2m-2})$, we obtain $2\xi+ \frac Q{\sqrt{1+\omega}-1}=O(\a^{2m})$.  
Using \eqref{valQ},\, \eqref{valqm} and noting that
 $$t_k= -\frac 14\frac 1{\sqrt{1-k\omega}-1}\,\frac 1{\sqrt{1-(k-1)\omega}-1}+O(\a^2)$$  
 we obtain formula \eqref{valxi0}.

    \end{proof}

  \subsection{Proof of Theorem\,\ref{T1}(ii), (iii): structure of the resolvent $(I-K(\sigma,\a))^{-1}$ for any $\a\ne 0$ }\label{Sec5}

    \begin{Theorem}\label{T-npoles}
 
   Let $\a\in\RR,\ \a\ne 0$. $(I-K(\sigma,\a))^{-1}$ is analytic for $\sigma\in\HHL$ except for one array of simple poles and it is continuous on $\overline{\HHL}$. 
      
   The poles are located at $\sigma=\q_0(\a)+n\omega$ (for all $n\in\ZZ$) where $\Im \q_0(\a)<0$ and $\q_0(\a)$ is real-analytic for $\a>0$.
   
   As a consequence $(I-K(\sigma,\a))^{-1}$ is analytic in a strip $\Re\sigma\in[0,\omega),\ \Im\sigma<0$ except for one simple pole.   
  
  \end{Theorem}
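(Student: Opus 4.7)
The plan is to combine the $\omega$-periodicity of $K(\cdot,\a)$ in $\sigma$ with a global argument-principle count, continuously tracking the unique pole $q_0(\a)$ produced by Theorem~\ref{L3} for small $\a$ as $\a$ moves along the positive real axis.

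First I would record the periodicity: the shift $(Sg)_n=g_{n+1}$ on $\ell^2(\ZZ)$ satisfies
\begin{equation*}
K(\sigma+\omega,\a)=S\,K(\sigma,\a)\,S^{-1},
\end{equation*}
because translating $\sigma$ by $\omega$ shifts all the coefficients $h_n,d_n,b_n$ in \eqref{eq:rec201} by one index. Hence the set of $\sigma$ at which $I-K(\sigma,\a)$ fails to be invertible is $\omega$-periodic, and it suffices to locate such $\sigma$ in a single fundamental strip $\mathcal{S}=\{\sigma:\Re\sigma\in[0,\omega)\}$; by Proposition~\ref{Prop2} all of them lie in the open lower half-plane $\HHL$. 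By the discussion in \S\ref{polsmalr}, these $\sigma$ are exactly the zeros in $\mathcal{S}\cap\HHL$ of
\begin{equation*}
F(\sigma,\a):=\rho(\sigma,\a)\,\Omega(\sigma,\a)-1,
\end{equation*}
a function meromorphic in $\sigma$ by Proposition~\ref{rhomero} and real-analytic in $\a$. Lemma~\ref{Llargeq}(iii) gives $|\rho|,|\Omega|=O(\a|\Im\sigma|^{-1/2})$ as $\Im\sigma\to-\infty$ uniformly on $\mathcal{S}$, so $F\to-1$ there; hence for each compact range of $\a$ all zeros lie in a bounded box $\mathcal{B}_M=\mathcal{S}\cap\{-M\Le\Im\sigma<0\}$ with $M$ chosen uniformly.

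The core step is the argument-principle count
\begin{equation*}
N(\a)=\frac{1}{2\pi i}\oint_{\partial\mathcal{B}_M}\frac{\partial_\sigma F(\sigma,\a)}{F(\sigma,\a)}\,d\sigma,
\end{equation*}
which by Theorem~\ref{L3}(i) equals $1$ for small $\a>0$. As $\a$ moves along $(0,\infty)$ the integer $N(\a)$ changes only when a zero crosses $\partial\mathcal{B}_M$. Zeros cannot exit through the top $\{\Im\sigma=0\}$ by Proposition~\ref{Prop2}; they cannot reach the bottom $\{\Im\sigma=-M\}$ by the uniform bound just described; and contributions from the two vertical sides cancel by the periodicity in Step~1. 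Hence $N(\a)\equiv 1$ throughout $\a>0$, producing exactly one pole $q_0(\a)$ per fundamental strip, whose $\omega$-translates give the full array. Real-analyticity of $q_0(\a)$ in $\a>0$ then follows from the implicit function theorem applied to $F$ at each simple zero; conservation of the total multiplicity $N(\a)$ simultaneously prevents the formation of a double zero, since such a zero would subsequently split into two simple ones and raise $N$.

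The main obstacle will be the careful treatment of the branch points $\sigma\in 1-\lfloor\omega^{-1}\rfloor\omega+\omega\ZZ$ on the real axis, where $F$ is analytic only in $\sqrt{\sigma-\beta_n}$: the contour $\partial\mathcal{B}_M$ must be indented just below these points and one must verify, using the Riemann-surface description of $\rho,\Omega$ from \S\ref{PropHomEq}, that no additional zero hides on the ``wrong'' sheet near the real axis. A secondary technicality is to confirm that the bound $M$ supplied by Lemma~\ref{Llargeq}(iii) is locally uniform in $\a$, so that the count $N(\a)$ is genuinely continuous as $\a$ varies; this is straightforward from the explicit estimates but needs to be recorded for the argument principle to apply.
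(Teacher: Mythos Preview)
Your overall strategy---periodicity plus a parameter-dependent zero count seeded by the small-$\a$ result---is exactly the paper's, but the execution has a real gap in the choice of the function to which you apply the argument principle. The function $F(\sigma,\a)=\rho(\sigma,\a)\Omega(\sigma,\a)-1$ is only \emph{meromorphic} in $\sigma\in\HHL$ (this is precisely the content of Proposition~\ref{rhomero}), so $\oint_{\partial\mathcal{B}_M}F'/F$ counts zeros minus poles, not zeros; the poles of $\rho$ and $\Omega$ (zeros of $g_+(\cdot-\omega)$ and $g_-$) are not under control and can migrate in and out of $\mathcal{B}_M$ as $\a$ varies. Second, $F$ itself is \emph{not} $\omega$-periodic: writing $F=W/(g_+(\cdot-\omega)g_-)$ with the Casoratian $W(q)=g_+(q)g_-(q-\omega)-g_+(q-\omega)g_-(q)$, a short computation from \eqref{eq:rec2n} gives $W(q+\omega)=-\tfrac{h(q-\omega)}{h(q+\omega)}W(q)$, so $F(q+\omega)/F(q)$ is a nontrivial function of $q$ and the two vertical integrals do \emph{not} cancel. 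The periodicity you invoke in Step~1 is that of the \emph{zero set} of $F$ (equivalently, of the spectrum of $K$), not of $F$ or of $F'/F$; that is not enough for the contour argument.

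The paper repairs exactly this by manufacturing a function that is simultaneously analytic and genuinely periodic: it conjugates $K_0$ by $T^{\sigma/\omega}$ to obtain the $\omega$-periodic, isospectral operator $K_2(\sigma,\a)$ (Lemma~\ref{PropK2}), then approximates $K_2$ by a finite-rank operator and reduces invertibility of $I-K_2$ to the vanishing of a finite determinant $h_A(\sigma,\a)=\det M_A$. This $h_A$ is holomorphic in $\sigma$, periodic, and nonvanishing on the top and bottom of the strip (Propositions~\ref{Prop2} and~\ref{P-large-q}), so Lemma~\ref{Larg} applies cleanly and the zero count is constant in $\a$. If you want to salvage your route, you must replace $F$ by such a periodic holomorphic object (for instance, $W$ divided by an explicit product of $h$'s that absorbs the quasi-periodicity factor and is pole-free in $\HHL$), and then separately verify that no spurious zeros are introduced; this is essentially what the paper's determinant construction achieves in a more robust way. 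The branch-point indentation you flag is a genuine technicality, but it is secondary to the two issues above.
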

  
 {\bf Note} that going back to the variable $p$, Theorem\,\ref{T-npoles}  implies the first statement of Theorem\,\ref{T1}(ii) and \eqref{eq:42}.
 
 \
 
 The structure of the proof of Theorem\,\ref{T-npoles} is as follows. The results were proved for small $\a$ in \S\ref{polsmalr}, see also Lemma\,\ref{L30}\,(i). We extend them to all $\a\ne 0$ using a general result on the constancy of number of zeros of analytic, periodic functions depending on a parameter contained in Lemma\,\ref{Larg}. To apply this Lemma, we construct a periodic operator  isospectral to $I-K(\sigma,\a)$ (Lemma\,\ref{PropK2}). This is especially convenient since working in the whole lower half plane would mean working with infinitely many poles while restricting $\sigma$ to a strip introduces a number of unnecessary complications.
 
Note that continuity up to the real line follows from Proposition\,\ref{Prop2}. Therefore it suffices to consider $\Im\sigma<0$, and by Remark\,\ref{KK0}, we can work with the operator $K_0$.

   \begin{Lemma}\label{Larg}
   Let $A,Q,\omega>0$ and define
\begin{equation}
  \label{eq:S}
S=\{ z\, |\,\Re z \in (0,\omega),\Im z<0\};\ \ I_A=(0,A)
\end{equation}

 Let $H(z,\a)$ be a function which is real-analytic in  $\a\in I_A$, analytic and exponentially bounded in  $z \in S$. Assume further that $H$ is periodic,  $H(z+\omega,\a)=H(z,\a)$, continuous in $\overline{S}\times I_A$, and
    \begin{equation}
      \label{eq:cond}
  H(z,\a)\ne 0 \ \ \text{ if }\ \ \  (z,\a)\in \{z,\Im z\Le -Q\text{ or }\Im z=0\}\times I_A    
   \end{equation}
Let $Z(\a)$ be the number of zeros,  counting multiplicity, of $H$ in $\overline{S}\times I_A$. 

Then the function $Z$ is constant. If $Z=1$, then defining $q_0$ by $H(q_0(\a),\a)=0$, $q_0$ is real-analytic in $\a\in I_A$. 
 \end{Lemma}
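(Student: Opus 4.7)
The plan is to apply the argument principle on a rectangular contour inside the period strip $\overline S$ and use periodicity of $H$ to cancel the vertical-side contributions. Fix $\a_0\in I_A$. Since $H(\cdot,\a_0)$ is analytic in $S$ and not identically zero, its zero set is discrete, so there is an $\epsilon>0$ such that $H(\cdot,\a_0)$ has no zero with $\Re z\in\{\epsilon,\omega+\epsilon\}$ and $\Im z\in[-Q,0]$. By joint continuity of $H$ on $\overline S\times I_A$ and compactness of these vertical segments, the nonvanishing persists for $\a$ in some open neighborhood $U\subset I_A$ of $\a_0$. Combined with $H(\cdot,\a)\ne 0$ on $\Im z=0$ and for $\Im z\le -Q$, this gives $H(\cdot,\a)\ne 0$ on the whole boundary of
\[
R_\epsilon=\{z\,:\,\epsilon\le\Re z\le\omega+\epsilon,\ -Q\le\Im z\le 0\}
\]
for every $\a\in U$; periodicity yields a bijection between the zeros of $H(\cdot,\a)$ inside $R_\epsilon$ and those counted by $Z(\a)$.

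By the argument principle,
\[
Z(\a)=\frac1{2\pi i}\oint_{\partial R_\epsilon}\frac{H_z(z,\a)}{H(z,\a)}\,dz.
\]
On the two vertical sides of $R_\epsilon$ the integrands coincide by periodicity while the orientations are opposite, so those contributions cancel and only the integrals along the top and bottom remain. Both are continuous in $\a\in U$ because their integrands are jointly continuous and bounded on $[\epsilon,\omega+\epsilon]\times U$. Thus $Z$ is an integer-valued continuous function on $U$, hence locally constant; since $I_A$ is connected, $Z$ is constant on $I_A$.

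Assume now $Z\equiv 1$ and let $q_0(\a)$ denote the unique (simple) zero inside $R_\epsilon$. Weighting the contour integral by $z$ yields
\[
q_0(\a)=\frac1{2\pi i}\oint_{\partial R_\epsilon}z\,\frac{H_z(z,\a)}{H(z,\a)}\,dz.
\]
The $z$-values on the right and left vertical sides differ by exactly $\omega$, so periodicity of $H_z/H$ reduces their combined contribution to $-\omega$ times the line integral of $H_z/H$ along one vertical edge, while the top and bottom contributions retain bounded continuous integrands. Real-analyticity of $H$ in $\a\in I_A$ means $H$ extends to a function jointly holomorphic in $(z,\a)$ on a complex neighborhood of $\overline{R_\epsilon}\times\{\a_0\}$; differentiation under each line integral (justified by the uniform lower bound $|H|>0$ on $\partial R_\epsilon$ over a neighborhood of $\a_0$) shows each integral is holomorphic in complex $\a$ near $\a_0$, and hence $q_0$ is real-analytic on $I_A$.

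The main subtlety is accommodating possible zeros of $H(\cdot,\a_0)$ on the lines $\Re z\in\{0,\omega\}$, which would render the argument-principle integral singular on the unshifted rectangle. The $\a_0$-dependent horizontal shift by $\epsilon$ sidesteps this, and because both conclusions — constancy of $Z$ and real-analyticity of $q_0$ — are local properties of $\a$, the localized argument suffices and globalizes by connectedness of $I_A$.
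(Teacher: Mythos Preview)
Your argument is essentially the paper's: apply the argument principle on a period rectangle, cancel the vertical sides by periodicity, and conclude that $Z$ is an integer-valued continuous function of $\a$, hence constant. The paper's execution differs only in minor ways --- it first multiplies $H$ by $e^{-2\pi iNz/\omega}$ so that $H\to 0$ as $\Im z\to-\infty$, reducing the count to a single horizontal integral that is manifestly real-analytic in $\a$, and it derives real-analyticity of $q_0$ from the implicit function theorem rather than from your weighted contour-integral formula. One technical fix: the hypotheses guarantee only continuity of $H$ (not of $H_z$) on $\Im z=0$, so you should lower the top edge of $R_\epsilon$ to $\Im z=-\delta$ for some small $\delta>0$ (the paper integrates along $\Im z=-\epsilon/2$); since $H\ne 0$ on $\Im z=0$ and is jointly continuous, no zeros lie in $-\delta\le\Im z\le 0$ for $\delta$ small, so the count is unaffected.
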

\begin{proof}
  Multiplying $H$ by $e^{-2\pi i N z/\omega}$ for some  $N\in\NN$ we can arrange that $H\to 0$ as $\Im z \to -i\infty$. By \eqref{eq:cond} and continuity, there is an $\epsilon=\epsilon(A)$ so that for all $\a<A$ all the zeros of $H(\cdot,\a)$ are in  $\{\Im z \in (-Q,-\epsilon)\}$.
  
  Fix $\a\in (0,A)$ and choose a small  $\beta\ge 0$ so that $H\ne 0$ if  $q$ is on $\partial S_\beta$, where  $S_\beta=S+\beta-i\epsilon$.  By the argument principle, the number of zeros in $S$ counting multiplicity is
\begin{equation}\label{zeros}
Z(\a)=-\frac{1}{2\pi i}\int_{-i\epsilon/2}^{\omega-i\epsilon/2}\frac{\partial_s H(s,\a)}{H(s,\a)}\, ds
\end{equation}
noting that by periodicity the contributions of the vertical sides of  $\partial S_\beta$  cancel out and the integral over $[\beta-i\epsilon,\omega+\beta-i\epsilon]$ equals the integral in \eqref{zeros}.

The right side of \eqref{zeros} is manifestly real-analytic in $\a$ and integer-valued, thus constant.  Real-analyticity when $Z=1$ is an immediate consequence of the implicit function theorem. 
\end{proof}

Let $T$ denote the forward shift in $\ell^2$  (cf. \eqref{eq:deft}). A straightforward calculation shows that for any $\sigma$
\begin{equation}\label{L-Floquet}
K_0(\sigma+\omega,\a)=T K_0(\sigma,\a) T^{-1}
 \end{equation}

  \begin{Lemma}\label{PropK2}
  The operator
  $$K_2(\sigma,\a):=T^{-\sigma/\omega}K_0(\sigma,\a)T^{\sigma/\omega} $$
   is periodic in $\sigma$. 
   
   $I-K_2(\sigma,\a)$ is a periodic operator isospectral to $I-K_0(\sigma,\a)$ and $\|K_2\|=O(e^{2\pi|\Im \sigma|/\omega})$ for large $\sigma$.
  \end{Lemma}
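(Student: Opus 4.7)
The plan is to define $T^{s}$ for complex $s$ via the Fourier representation and then verify the three claims (periodicity, isospectrality, and norm bound) in turn, each of which should reduce to a short computation.

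First I would realize $T$ in Fourier representation: on $\ell^2(\ZZ)$, $T$ is unitary with spectrum the unit circle, and under the Fourier isometry $\ell^2(\ZZ)\to L^2([0,2\pi))$ it becomes multiplication by $e^{-i\theta}$. Fixing the branch $\log e^{-i\theta}=-i\theta$ on $\theta\in[0,2\pi)$, I would define $T^{s}$ as multiplication by $e^{-is\theta}$. This yields a bounded, invertible operator on $L^2([0,2\pi))$ with inverse $T^{-s}$, satisfying the group law $T^{s_1}T^{s_2}=T^{s_1+s_2}$ and in particular $T\cdot T^{s}=T^{s+1}$.

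With this in hand, periodicity is a one-line computation using \eqref{L-Floquet}:
\[
K_2(\sigma+\omega,\a)=T^{-\sigma/\omega-1}\bigl(T\,K_0(\sigma,\a)\,T^{-1}\bigr)T^{\sigma/\omega+1}=T^{-\sigma/\omega}K_0(\sigma,\a)T^{\sigma/\omega}=K_2(\sigma,\a).
\]
Isospectrality of $I-K_2(\sigma,\a)$ with $I-K_0(\sigma,\a)$ is then immediate, since $I-K_2=T^{-\sigma/\omega}(I-K_0)T^{\sigma/\omega}$ is a genuine similarity transformation by a bounded invertible operator.

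For the norm bound, the multiplier representation gives $\|T^{s}\|=\sup_{\theta\in[0,2\pi)}|e^{-is\theta}|=\max\bigl(1,e^{2\pi\Im s}\bigr)$; applied to $s=\pm\sigma/\omega$ this yields $\|T^{-\sigma/\omega}\|\,\|T^{\sigma/\omega}\|=e^{2\pi|\Im\sigma|/\omega}$. Combining this with the fact that $\|K_0(\sigma,\a)\|$ stays bounded in the relevant region (its matrix entries are $O(|h(\sigma+n\omega)|)$, decaying in $n$ and controlled for large $|\Im\sigma|$), submultiplicativity delivers $\|K_2\|=O(e^{2\pi|\Im\sigma|/\omega})$. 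The only delicate point I anticipate is the branch choice for $T^{s}$: different choices of $\arg$ produce different operators, and one needs a choice compatible with the group law so that the periodicity computation goes through cleanly. Fixing the principal branch on $[0,2\pi)$ as above does the job, and the rest is routine.
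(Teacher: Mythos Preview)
Your proposal is correct and follows essentially the same approach as the paper: both pass to the Fourier side $L^2(\mathbb{T})$, where $T$ becomes multiplication by $e^{-i\phi}$, and define $T^{s}$ as multiplication by $e^{-is\phi}$ (equivalently, writing $T=e^{iA}$ with $A$ multiplication by $-\phi$). The paper's proof is a terse sketch, whereas you spell out the periodicity computation from \eqref{L-Floquet}, the similarity argument for isospectrality, and the multiplier norm estimate explicitly; your added remark about the branch choice is a sensible caveat but does not reflect a different strategy.
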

  \begin{proof}
   
 Note that $T$ is a unitary operator, and thus $T=e^{iA}$ for some self-adjoint bounded operator $A$.  In $L^2(\mathbb{T})$, the Fourier transform space,  $T$ is multiplication by $e^{-i\phi}$ and $A$ is multiplication by $-\phi$. This reduces the analysis to a compact analytic manifold. 
   
    \end{proof}
    \begin{Lemma}\label{L21} 
For small $|\a|$ there is a unique pole of $(I-K_0(\sigma,\a))^{-1}$ in the strip $\Re q\in [0,\omega)$. The pole is simple and  analytic in $\a$.
  \end{Lemma}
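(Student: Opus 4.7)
The plan is to derive Lemma~\ref{L21} as an immediate consequence of Theorem~\ref{L3}(i), combined with Remark~\ref{KK0} to transfer the conclusion from $K$ to $K_0$, with only the simplicity of the pole requiring a short extra argument.

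First, by Proposition~\ref{Prop2} the resolvent $(I-K_0(\sigma,\a))^{-1}$ is meromorphic in $\sigma$ on the open lower half plane and continuous up to the real axis away from branch points, so it suffices to locate its singular set in the strip $\Re\sigma\in[0,\omega)$. By Remark~\ref{KK0}, the identity $(I-K_0)^{-1}=(I-K)^{-1}B$ holds wherever the diagonal multiplier $B$ is invertible, i.e.\ for $\sigma\notin 1+\omega\ZZ$; on this set $I-K$ and $I-K_0$ share the same singular points. Theorem~\ref{L3}(i) shows that, for $|\a|$ sufficiently small, the unique singularity of $(I-K(\sigma,\a))^{-1}$ in the strip is at $\sigma=q_0(\a)=\a^2 s_0(1+\a f(\a))$, which is $O(\a^2)$ from the origin and hence uniformly separated from the branch locus $1+\omega\ZZ$ for small $\a$. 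Consequently $(I-K_0(\sigma,\a))^{-1}$ has exactly one singularity in the strip, located at $q_0(\a)$, with analyticity in $\a$ already provided by Theorem~\ref{L3}(i).

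To show that this singularity is a \emph{simple} pole I would invoke analytic Fredholm theory. Since $K_0(\sigma,\a)$ is compact and analytic in $\sigma$, the order of the pole of $(I-K_0)^{-1}$ at $q_0(\a)$ equals the order of vanishing of the Fredholm determinant $D(\sigma):=\det(I-K_0(\sigma,\a))$ there. The proof of Theorem~\ref{L3}(i) reduces the nontriviality of $\ker(I-K_0)$ to the scalar equation $F(s,\a):=\rho(\a^2 s,\a)-\Omega(\a^2 s,\a)^{-1}=0$, and the leading-order calculation in~\eqref{match} gives $\partial_s F\ne 0$ at the zero $s=s_0(1+o_a)$; thus $F$ vanishes to first order, which in turn forces a first-order vanishing of $D$, i.e.\ a simple pole.

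The step requiring the most care is the precise identification of the order of vanishing of $F$ with that of $D$. This is standard when the kernel and cokernel are one-dimensional (here Fredholm index zero with a one-dimensional kernel generated, by the continued-fraction construction of \S\ref{PropHomEq}, by the pair $(\rho,\Omega^{-1})$): the relevant algebraic-equals-geometric-multiplicity condition reduces to $\ell\bigl((\partial_\sigma K_0)(q_0,\a)\bigr)u\ne 0$ for left and right null vectors $\ell,u$, and this nondegeneracy is exactly what $\partial_s F\ne 0$ encodes. I expect this bookkeeping to be the main, though essentially routine, obstacle; the analyticity of $q_0(\a)$ then also follows from the analytic implicit function theorem applied to $F(s,\a)=0$ at $(s_0,0)$.
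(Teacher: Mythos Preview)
Your proposal is correct in substance but differs from the paper's route to simplicity. The paper argues by perturbation from $\a=0$: in the regularized form \eqref{eq:rec201}, $I-K(\sigma,0)$ is just diagonal multiplication by $b_n(\sigma)$, whose only zero in the strip (at $\sigma=0$, coming from the simple pole of $h_0$) is manifestly simple, and the argument principle then propagates simplicity to small $\a$. You instead establish simplicity directly at each small $\a\ne 0$ via the transversal vanishing of $F=\rho-\Omega^{-1}$ already obtained in the proof of Theorem~\ref{L3}(i). The paper's route is shorter because the $\a=0$ case is explicit; yours has the merit of not needing to pass through $\a=0$ at all (note that $(I-K_0(\sigma,0))^{-1}=I$ has no pole, so the paper's argument principle is really applied to $K$, with Remark~\ref{KK0} used implicitly to transfer back to $K_0$).

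One technical point to flag: you phrase the simplicity criterion via the Fredholm determinant $\det(I-K_0)$, but $K_0$ is not trace class---its off-diagonal entries $h_n$ decay only like $|n|^{-1/2}$, so even the Hilbert--Schmidt norm diverges---and this determinant is not defined as written. Your final paragraph already contains the correct fix: bypass the determinant and verify the simple-pole criterion $\langle(\partial_\sigma K_0)\,y_0,\,y_0^*\rangle\ne 0$ directly, which is exactly what $\partial_s F\ne 0$ encodes (this is the quantity $\langle S_0 y_0,y_0^*\rangle$ appearing later in \S\ref{abstract}). So the issue is one of phrasing rather than a genuine gap.
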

\begin{proof}     
  Since the position of the unique pole is analytic for small $\a$, and is manifestly simple (see \eqref{eq:rec201}) when $\a=0$, this follows from the argument principle.
 \end{proof}  
   \begin{Lemma}\label{L30}
  (i)   For small $\a>0$ the resolvent $(I-K_0(\cdot,\a))^{-1}$ has only one array of poles,  located in the lower half plane at $\sigma=\a^2 s_0(1+\a g(\a))+n\omega,n\in\ZZ$ with $g$ analytic.

  (ii) For any $\a,q$,  $\text{dim Ker}(I-K_0(\sigma,\a))\in \{0,1\}$ and $\text{dim Ker}(I-K_0(\sigma,\a)^*)\in \{0,1\}$, where $*$ denotes the adjoint.
\end{Lemma}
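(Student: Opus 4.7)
The plan is to combine three already-established pieces: the single-strip uniqueness from Theorem\,\ref{L3}, the Floquet-type identity \eqref{L-Floquet}, and the factorial dichotomy for the homogeneous recurrence in Corollary\,\ref{Cgenrec}.

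For (i), I would first apply Theorem\,\ref{L3} to obtain a unique $q_0(\a)=\a^2 s_0(1+\a f(\a))$ in the fundamental strip $\{\Re\sigma\in[0,\omega)\}\cap\HHL$ at which $\text{Ker}(I-K(\sigma,\a))\ne\{0\}$, for $|\a|$ small enough. Remark\,\ref{KK0} lets me transfer this statement to $K_0$ on $\{\sigma\notin 1+\omega\ZZ\}$, since there $I-K$ and $I-K_0$ have identical kernel; the excluded points are handled by Proposition\,\ref{Prop2}, which shows $(I-K)^{-1}$ only exhibits a mild $\sqrt{\sigma-\beta_n}$ branch behavior at $\sigma=\beta_n$ and no pole. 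The identity \eqref{L-Floquet}, $K_0(\sigma+\omega,\a)=TK_0(\sigma,\a)T^{-1}$, then shows that the pole set of $(I-K_0(\cdot,\a))^{-1}$ is $\omega$-periodic, so it equals exactly $\{q_0(\a)+n\omega:n\in\ZZ\}$; any hypothetical extra pole would reduce modulo $\omega$ to a second pole in the fundamental strip, violating uniqueness. Simplicity and analyticity of $q_0(\a)$ in $\a$ are transported from Lemma\,\ref{L21}, and Lemma\,\ref{Lsmallr} confirms the array sits in the open lower half plane.

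For (ii), I would use Corollary\,\ref{Cgenrec}. The homogeneous two-sided recurrence $(I-K_0(\sigma,\a))g=0$ admits exactly two linearly independent solutions at each end: one decaying like $(\a/2)^{|n|}/(|n|!)^{1/2}$ and one growing reciprocally. Any $g\in\ell^2(\ZZ)$ must agree with the decaying mode at $+\infty$ up to a scalar (the tail of any decaying solution determines it uniquely), forcing $\text{dim Ker}(I-K_0(\sigma,\a))\Le 1$. For the adjoint, $K_0^*$ in the standard basis of $\ell^2(\ZZ)$ is again a bilateral tridiagonal operator, with off-diagonal entries $\pm\overline{\a h_n}$; since $|h_n|=O(|n|^{-1/2})$ is unchanged, the asymptotic analysis of \cite{Immink} underlying Corollary\,\ref{Cgenrec} applies verbatim to $K_0^*$ and yields the same factorial dichotomy at each end, hence $\text{dim Ker}(I-K_0(\sigma,\a)^*)\Le 1$.

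The main obstacle I anticipate is ensuring the Floquet transfer in (i) is exhaustive: that no singularity of $(I-K_0(\cdot,\a))^{-1}$ is hidden between strips or accumulates toward the branch-point line $\Re\sigma\in 1+\omega\ZZ$. Both concerns are handled uniformly by reducing any hypothetical pole modulo $\omega$ and combining the estimate of Lemma\,\ref{Lsmallr} with the boundary analysis of Proposition\,\ref{Prop2}. Once this is in place, the rest is bookkeeping on top of results already in hand.
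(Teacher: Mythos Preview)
Your proposal is correct and follows essentially the same route as the paper: part (i) is deduced from Theorem\,\ref{L3} and Lemma\,\ref{L21} (with the $\omega$-periodicity of the pole set, which you make explicit via \eqref{L-Floquet}, being the implicit glue), and part (ii) from Corollary\,\ref{Cgenrec} applied to both $K_0$ and its adjoint. The paper's argument is terser but identical in substance; your only addition is spelling out the Floquet transfer and the $K\leftrightarrow K_0$ passage via Remark\,\ref{KK0}, which the paper leaves to the reader.
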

\begin{proof}
  (i) is an immediate corollary of Lemma \ref{L21} and Theorem \ref{L3}.
  
  (ii) follows from Corollary \ref{Cgenrec}. Indeed, if $\gg\in \mathop{Ker} (I-K_0)$ then $\gg$ is an $\ell^2$ solution of the linear recurrence \eqref{eq:rec1h} which cannot have a two dimensional space of solutions decaying at both  $n\to\pm\infty$ by Corollary\,\ref{Cgenrec}. 
 
 Similar arguments apply to $\mathop{Ker} (I-K_0^*)$, since $K_0^*\yy=\a\overline{h}(T^{-1}-T)\yy$ and $K_0^*\yy=\yy$ yields a second order difference equation similar to that of $K_0$, having two solutions $O((n!)^{1/2}(-2\sqrt{\omega}/\a)^n)$, respectively $O((n!)^{-1/2}(\a/2\sqrt{\omega})^n)$ for $n\to\infty$ and two similar solutions (one decreasing to $0$ and another one increasing) as $n\to-\infty$.
\end{proof}

\begin{Proposition}\label{P-large-q}
For any $A>0$ there is $Q,\epsilon>0$ such that the Neumann series
  \begin{equation}
    \label{eq:Neumann}
    (I-K(\sigma,\a))^{-1}=\sum_{m=0}^\infty K(\sigma,\a)^m 
  \end{equation}
  converges to an operator valued function, analytic in $\{(\sigma,\a)|\Re\a\in [0,A), |\Im\a|<\epsilon, \Im \sigma<-Q\}$. 
  
  A similar statement holds for $K_0(\sigma,\a)$.
\end{Proposition}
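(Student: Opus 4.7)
The plan is to show that $\|K(\sigma,\a)\|_{\ell^2(\ZZ)\to\ell^2(\ZZ)}<1$ uniformly on the stated region once $Q$ is chosen large enough. By the standard Neumann series theorem this gives convergence of $\sum_{m\ge 0}K(\sigma,\a)^m$ in operator norm to $(I-K)^{-1}$, uniformly on compact subsets, and joint analyticity of the limit follows because $K(\sigma,\a)$ is itself jointly analytic on the region --- it is affine in $\a$ and, by the branch cut choice of Remark\,\ref{cuts}, analytic in $\sigma$ throughout the open lower half plane. Hence each $K^m$ is jointly analytic in $(\sigma,\a)$, and a uniform operator-norm limit of analytic operator-valued functions is analytic.

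From \eqref{eq:rec201} one reads that $K$ is the sum of a diagonal with entries $1-b_n$ and two weighted shifts with entries $\pm\a d_n/h_{n\mp 1}$. Since on $\ell^2(\ZZ)$ the operator norm of each of these three pieces is the $\sup_n$ of the absolute values of its entries,
\[
\|K(\sigma,\a)\|\le \sup_n|1-b_n|+|\a|\sup_n\left|\frac{d_n}{h_{n-1}}\right|+|\a|\sup_n\left|\frac{d_n}{h_{n+1}}\right|.
\]
For $\Im\sigma=-Q$ with $Q$ large, $|q_n|\ge Q$ and $|q_n-i\beta|\ge|q_n|$ uniformly in $n\in\ZZ$; this gives $|h_n|\le C/\sqrt{Q}$, $|d_n|\le C/|q_n|\le C/Q$, and $|1/h_n|\le C(1+\sqrt{|q_n|})$, so each of the two shift contributions is at most $C|\a|/\sqrt{Q}$ uniformly in $n$.

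The genuinely nontrivial step is bounding the diagonal. Writing
\[
b_n=\frac{(\sqrt{q_{n+1}-1}-i)(\sqrt{q_{n-1}-1}-i)}{q_n-i\beta}
\]
and using $(q_{n+1}-1)(q_{n-1}-1)=(q_n-1)^2-\omega^2$, for large $|q_n|$ the numerator equals $q_n+O(\sqrt{|q_n|})$ (the branch of $\sqrt{\cdot}$ being unambiguously fixed on the lower half plane by the upward cut placement), whence $b_n=1+O(|q_n|^{-1/2})$ and $\sup_n|1-b_n|\le C/\sqrt{Q}$. Combining the three contributions yields $\|K(\sigma,\a)\|\le C(1+|\a|)/\sqrt{Q}$, which is $<1$ once $Q$ is chosen large enough depending only on $A$ and $\epsilon$. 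The statement for $K_0$ is strictly easier, as $K_0$ has no diagonal part and the entries $\pm\a h_{n\pm 1}$ of its two shifts are bounded directly by $|\a|/(2(\sqrt{Q-1}-1))$. The main obstacle in the argument is the estimate $b_n\to 1$: the individual factors $\sqrt{q_{n\pm 1}-1}-i$ have size $O(\sqrt{|q_n|})$, so one must carefully match their leading $q_n$ contribution against the denominator $q_n-i\beta$ to expose the $O(|q_n|^{-1/2})$ decay uniformly in $n$; once this cancellation is nailed down the rest is routine bookkeeping.
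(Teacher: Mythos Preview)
Your proof is correct and follows essentially the same route as the paper's: show $\|K(\sigma,\a)\|<1$ for $\Im\sigma$ sufficiently negative by bounding the entries of the diagonal and shift parts. The paper's own proof is a one-liner invoking only that shifts have norm one and $h(q)=O((\Im q)^{-1/2})$; you have supplied the detail it omits, namely the cancellation $b_n=1+O(|q_n|^{-1/2})$ needed to control the diagonal of $K$ (the paper's remark suffices as written only for $K_0$).
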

\begin{proof}
  This simply follows from the fact that the shift operators have norm one, $h(q)=O((\Im q)^{-1/2})$, and are analytic in this regime.
\end{proof}

 \begin{proof}[Proof of Theorem \ref{T-npoles}] By Lemma\,\ref{PropK2} it suffices to prove these results for $(I-K_2(\sigma,\a))^{-1}$.

Let $F_k$ be finite rank operators converging to $K_2$ as $k\to\infty$ and $P_k$ the projectors on the range of $F_k$.
 Let $A>0$ and  choose, cf. Proposition \ref{P-large-q}, $Q=Q(A)$ so that    $I-K_2$ is invertible if  $\Im \sigma<-Q$, and  $\a\in(0,A]$.
 Let $\epsilon>0$ be small enough and $k=k_A$ large enough so that for $\a\in [0,A+\epsilon]$ and  $q$ in $B$ where 
 $$B:=\{z|\Re z \in[-\epsilon,\omega+\epsilon], \Im z\in [-Q-\epsilon,0]\}$$  we have $\|K_2-F_k\|<\epsilon$. The easily checked identity (cf. in  \cite{Reed-Simon1} p. 202)
  \begin{equation}
    \label{eq:reed-simon}
    (I-K_2)^{-1}=(I-F)^{-1}(I-(K_2- F_k ))\ \ \text{where }F:=F_k(I-(K_2-F_k))^{-1}
  \end{equation}
implies $I-K_2$ is invertible iff $I-F$ is invertible. Now $F$ is finite rank and  by the usual Fredholm alternative  $I-F$   is not invertible iff
  $x=Fx$ has a nonzero solution. 
   Since $F=P_k F$, if $x=Fx$ we have $x=P_kx$. Thus the condition for $(I-K_2)^{-1}$ to have a pole in $S$ is  $h_A:=\text{det}\, M_A=0$ where $M_A$ is the matrix of $P_k(I-F)P_k$. 

  To end the proof, we note that $A$ is arbitrary, $h_A$ satisfies the conditions of Lemma \ref{Larg}, by Proposition \ref{Prop2} and analyticity of the matrix elements of $M_A$. This also completes the proof of the theorem. 
  \end{proof}  
  
  \subsection{Proof of Theorem\,\ref{T1}(iii)}\label{Sec555}
     The first part is an immediate consequence of Lemma \ref{L30}. The second statement follows from \eqref{eq:root} and the Proposition \ref{rhomero}.

\subsection{Proof of Theorem\,\ref{T4}\,(ii): calculation of the residues}

\subsubsection{General expression for residues and proof of \eqref{eq:43}}\label{abstract} The argument is based on a Laurent expansion of the resolvent and general properties of compact operators.

In this section we consider only $\sigma$ with $\Im\sigma<0$; therefore we can work with the operator $K_0$, see Remark\,\ref{KK0}.

Note that $Ran( I-K_0(\sigma,\alpha))$ and $Ran( I-K_0(\sigma,\alpha)^*)$ are closed, since $K_0(\sigma,\alpha)$, and therefore $K_0(\sigma,\alpha)^*$ are compact \cite{Linop}.

Let $q_0=q_0(\a)$ be as in Theorem\,\ref{T-npoles}. Then $\mathop{Ker} (I-K_0(q_0,\a))\ne\{0\}$, therefore it is one dimensional by Lemma\,\ref{L30}. Let $\yy_0$ be a unit vector generating this kernel.

Since $0$ belongs to the spectrum of $L_0:=I-K_0(q_0,\a)$ then it also belongs to the spectrum of $L_0^*=I-K_0(q_0,\a)^*$ 
This means that $1$ is an eigenvalue of the compact operator $K_0(q_0,\a)^*$, hence it is in the point spectrum. Let $\yy_0^*$ be a unit vector generating $\mathop{Ker} L_0^*$.

By Theorem\,\ref{T-npoles} equation \eqref{eq:rec201} has a solution $\gg$ which has a pole of order one when $z:=q-q_0=0$. Thus  $\gg=\gg_{-1}z^{-1}+\gg_0+z\tilde{\gg}$ where $\tilde{\gg}$ is analytic at $z=0$.

Since $K_0(\sigma,\a)$ is analytic in $\sigma$ at $\sigma=q_0$ we can write
$I-K_0(\sigma,\a)=L_0+zS(z,\a)$ with $S$ analytic. With this notation equation \eqref{Kdesigma} becomes
\begin{equation}\label{equa}
 L_0\left( z^{-1}\gg_{-1}+\gg_0+z\tilde{\gg}\right) +z S(z,\a)\left(z^{-1}\gg_{-1} +\gg_0+z\tilde{\gg}\right)=\ff
 \end{equation}
Since $\ff$ is analytic, this implies that $\gg_{-1}=\lambda \yy_0$ for some scalar $\lambda$.

Decompose $\ff=\la \ff,\yy_0^*\ra \yy_0^*+\ww$ where $\ww\in (\mathop{Ker}\,L_0^*)^\perp=Ran\,L_0$. With the notations  $S_0:=S(0,\a)=\frac{\partial L_0}{\partial q}|_{z=0}$ and  $\ff_0=\ff(q_0,\a)+O(z)$, equation \eqref{equa}  becomes
\begin{equation}\label{equa2}
 L_0\gg_0 +\lambda S_0\yy_0+O(z)=\la \ff,\yy_0^*\ra \yy_0^*+\ww
  \end{equation}
Noting that $L_0$ is invertible from $(\mathop{Ker}\,L_0)^\perp$ to $Ran\, L_0$ equation \eqref{equa2} is solvable only if 
$$\la \ff_0,\yy_0^*\ra \yy_0^*-\lambda S_0\yy_0\in Ran\, L_0=(Sp\, {\yy_0^*})^{\perp}$$
The argument above works for arbitrary $\ff_0\in\ell^2$, $\la S_0\yy_0,\yy_0^*\ra\ne 0$ and  $\lambda=\la \ff_0,\yy_0^*\ra/\la S_0\yy_0,\yy_0^*\ra$, thus
\begin{equation}
  \label{residue}
 \gg=\frac{1}{\sigma-q_0}\, \frac{\la \ff_0,\yy_0^*\ra}{\la S_0\yy_0,\yy_0^*\ra}\, \yy_0+O(1)\ \ \ (\sigma\to q_0)
\end{equation}

       \begin{Proposition}\label{residues} The residues $R=(R_n)_{n\in\ZZ}$ defined in Theorem\,\ref{T1}\,(ii) are multiples of $\yy_0$, given by \eqref{residue}, therefore they satisfy $(I-K_0)R=0$ and consequently $R_n=O(2^{-n}\a^n n^{-n/2})$, implying \eqref{eq:43}.
        \end{Proposition}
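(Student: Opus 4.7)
The strategy is to combine the general residue formula \eqref{residue} derived in \S\ref{abstract} with the factorial decay of solutions of the homogeneous recurrence given by Corollary \ref{Cgenrec}. Concretely, the residues $R_n$ will be recognized as the components of a vector in $\mathop{Ker}(I-K_0(q_0,\a))$, and that kernel has already been shown to consist of factorially decaying sequences.

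First, I would reconcile the two notations. Under the change of variables $p=-iq$ and the discretization $g_n(\sigma)=g(\sigma+n\omega)=\Phi(-i(\sigma+n\omega))$, a simple pole of $\Phi$ at $p_n=-i(q_0+n\omega)$ with residue $R_n$ becomes, for every $n\in\ZZ$, a simple pole of the component $g_n(\sigma)$ at $\sigma=q_0$ with residue $iR_n$. Hence the vector-valued function $\gg(\sigma)$ has a simple pole at $\sigma=q_0$ whose principal part is $iR/(\sigma-q_0)$, where $R:=(R_n)_{n\in\ZZ}$. Comparing this with \eqref{residue}, which identifies the residue of $\gg$ as a scalar multiple of $\yy_0$, one reads off that $R$ itself is a scalar multiple of $\yy_0$.

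Since $\yy_0$ spans $\mathop{Ker}(I-K_0(q_0,\a))$ by construction, we obtain $(I-K_0(q_0,\a))R=0$; equivalently, $R$ is an $\ell^2(\ZZ)$ solution of the homogeneous recurrence \eqref{eq:rec1h}. Corollary \ref{Cgenrec} then gives
\begin{equation*}
R_n = O\bigl(2^{-|n|}\,\a^{|n|}\,|n|^{-|n|/2}\bigr) \qquad\text{as } |n|\to\infty,
\end{equation*}
and a one-line application of Stirling's formula $(|n|!)^{1/2}\sim |n|^{|n|/2}e^{-|n|/2}$ shows that, for any fixed $\a$, the right-hand side is $o\bigl((|n|!)^{-1/2}\bigr)$, thereby establishing \eqref{eq:43}.

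The only substantive step is the identification of the residue vector with an element of $\mathop{Ker}(I-K_0)$; everything else is either a change of coordinates or a direct appeal to Corollary \ref{Cgenrec}. I do not anticipate any further obstacle, since the abstract Laurent/Fredholm analysis carried out in \S\ref{abstract} has already done the heavy lifting, reducing this proposition to a routine bookkeeping argument.
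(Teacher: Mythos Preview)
Your proposal is correct and follows essentially the same route as the paper. The only organizational difference is that the paper, instead of invoking \eqref{residue} directly, integrates the functional equation \eqref{eq:rec11} over small circles $\partial\mathbb{D}_n$ around each pole $q_n$; since the inhomogeneity $f$ is analytic there, the contour integrals of the $g_n$ satisfy the \emph{homogeneous} recurrence, which is exactly your conclusion $(I-K_0)R=0$ obtained via the Laurent/Fredholm analysis. Both arguments then finish identically by appealing to Corollary~\ref{Cgenrec}.
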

   
        \begin{proof}
          To see this we consider disks $\mathbb{D}_n$ around the  poles of $\F$ small enough to contain no pole of the nonhomogeneous part of the equation and write the equation for $\frac{1}{2\pi i}\oint_{\partial\DD_n}\F$, which is just the homogeneous part of the recurrence. The solution is thus a multiple of the eigenvector $y_0$ (noting that $R_n=-i$Res$(g,q_n)$) and Corollary \ref{Cgenrec} completes the proof.
        \end{proof}

\subsection{Concrete calculations: proof of Theorem\,\ref{T4}\,(ii) }\label{cabstract} This follows directly from the following

\begin{Lemma} For $\omega\in(\frac 1m,\frac1{m-1})$ the $0^{\rm th}$ component of the residue of $(I-K)^{-1}\ff$ is

  \begin{equation}
    \label{eq:33}
R_0=\frac{i m\a^m p_0}{2^m\prod_{k<m}(1-\sqrt{1-k\omega})} (1+o_a)
\end{equation}
(with $o_a$ defined in the beginning of \S\ref{PerturbTh}. The other components are of higher order in $\a$.
\end{Lemma}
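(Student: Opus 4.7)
The strategy will be to apply the abstract residue formula from \S\ref{abstract},
\[ \mathbf{R}=\lambda\,\mathbf{y}_0,\qquad \lambda=\frac{\langle \mathbf{f}_0,\mathbf{y}_0^*\rangle}{\langle S_0\mathbf{y}_0,\mathbf{y}_0^*\rangle},\qquad S_0=-\partial_\sigma K_0\big|_{q_0}, \]
where $\mathbf{y}_0$ and $\mathbf{y}_0^*$ span the kernels of $I-K_0(q_0,\alpha)$ and of its adjoint, and to evaluate each ingredient to leading order in $\alpha$ using the continued-fraction machinery of \S\ref{PropHomEq}. Reading off the zeroth component will yield $R_0$.

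The central asymptotic input will be a description of $\mathbf{y}_0$. The ratios $(\mathbf{y}_0)_n/(\mathbf{y}_0)_{n-1}=\rho(q_0+n\omega,\alpha)$ and their analogues on the negative-$n$ side are controlled by the convergent continued fractions of Lemma\,\ref{Lconfrac}. Substituting the explicit evaluations $h(k\omega)=i/(2(1-\sqrt{1-k\omega}))$, valid for $1\le k<m$ in the branch used in the proof of Theorem\,\ref{L3}(ii), and using the pole-like behaviour of $h$ near $q=0$ together with $q_0\approx \alpha^2 s_0$, I will iterate through $n=2,\dots,m$ to produce the characteristic factor $\prod_{k=1}^{m-1}(1-\sqrt{1-k\omega})$ in the leading expression for $(\mathbf{y}_0)_n$; beyond index $m$ the values $h(k\omega)$ become analytic and supply additional suppression in $\alpha$. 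For real $\alpha$ the adjoint null vector is then pinned down by the identity $(\mathbf{y}_0^*)_n=(-1)^n\,\overline{h_n/h_0}\,\overline{(\mathbf{y}_0)_n}$, checked by substituting it into $K_0^*\mathbf{y}_0^*=\mathbf{y}_0^*$; this reduces the pairing to $\langle\mathbf{a},\mathbf{y}_0^*\rangle=h_0^{-1}\sum_n(-1)^n h_n(\mathbf{y}_0)_n\,a_n$.

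The two inner products will then be evaluated as explicit sums. The forcing $f_n=-\alpha\omega/((q_0+(n-1)\omega)(q_0+(n+1)\omega))$ is enhanced at $n=\pm 1$ by the small factor $q_0=\alpha^2 s_0$, and the same $q_0$ enhances $S_0\mathbf{y}_0$ through $h'(q_0)\sim -i/q_0^2$. The dominant contributions to both numerator and denominator come from the indices $n=0,\pm1$, and after expansion share a common $\omega$-dependent algebraic factor of the form $[\sqrt{1-\omega}-\sqrt{1+\omega}]/[(1-\sqrt{1-\omega})(1-\sqrt{1+\omega})]$ that cancels in the ratio $\lambda$. The surviving powers of $\alpha$ combine with $(\mathbf{y}_0)_0$, and the chain of $\rho$-iterates from index $2$ to index $m$ contributes the extra $\alpha^{\,m-1}/\prod_{k=1}^{m-1}(1-\sqrt{1-k\omega})$; finally, using $\alpha^2 s_0=ip_0(1+o_a)$ gives exactly \eqref{eq:33}. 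The last sentence of the lemma then follows because the other residues are scalar multiples of the same $\mathbf{y}_0$, and the higher-order decay in $\alpha$ of $(\mathbf{y}_0)_n/(\mathbf{y}_0)_0$ was already prepared for in Theorem\,\ref{T4}(ii).

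The main obstacle will be the careful bookkeeping of these cancellations: individually each inner product is large of order $\alpha^{-r}$ for an $r$ depending on $m$, whereas $R_0$ itself is small of order $\alpha^{m+2}$. One must therefore expand each continued fraction to sufficiently high order, use consistently the branch of $\sqrt{q-1}$ adopted in Remark\,\ref{cuts}, and keep track of subleading terms in $h(q_0)$ and $h'(q_0)$ until the pattern $\prod_{k=1}^{m-1}(1-\sqrt{1-k\omega})$ in the denominator of the final answer becomes transparent.
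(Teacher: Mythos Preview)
Your plan is essentially the paper's own approach: apply the abstract residue formula of \S\ref{abstract} and evaluate $\mathbf{y}_0,\mathbf{y}_0^*,\mathbf{f}_0,S_0$ to leading order via the continued fractions of \S\ref{PropHomEq}; the paper carries this out explicitly only for $m=1$ by projecting onto components $-1,0,1$ and states that general $m$ is similar with more components retained. Your closed-form relation $(\mathbf{y}_0^*)_n=(-1)^n\,\overline{h_n/h_0}\,\overline{(\mathbf{y}_0)_n}$ for real $\alpha$ is a correct and useful shortcut the paper does not exploit, but otherwise the strategy and the bookkeeping of the $O(\alpha^{-r})$ cancellations you flag coincide with what the paper does.
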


{\em Proof.}
 Straightforward calculations based on the continued fraction \eqref{eq:24}, \eqref{eq:24O}, and their matching condition  \eqref{eq:root}, \eqref{eq:root}  yield power series in small $\alpha$ for $\yy_0,\, \yy_0^*$ for \eqref{residue}. Note that the residues do not depend on the normalization of these vectors. 
 
 We prove Theorem\,\ref{T4}\,(ii) only for $\omega>1$ (this corresponds to $m=1$ in \eqref{eq:33}). For the other cases the proof is similar, retaining  a sufficient number of components of the vectors $\yy_0,\ \yy_0^*$.

 Choosing the component $-1$ of $\yy_0$, $\yy_{0,-1}=1$, we obtain that $\yy_{0,0}=O(\alpha)$, and $\yy_{0,1}=-\yy_{0,-1}+O(\alpha^2)$. Inductively, $\yy_{0,\pm n}=O(\alpha^{|n|-1})$. Similarly, we chose $\yy^*_{0,0}=1$.
 
More concretely, we let $P$ be the orthogonal projector in $\ell^2$ on the components $-1,0,1$ and, 
to simplify the notation, we omit the factor $(1+o_a)$ in the formulas below. Then
   \begin{equation}
     \label{eq:rr0}
   P\mathbf{y}_0=
                   \left[ \begin {array}{c} -1\\ \noalign{\medskip}{\frac {\a \,
 \left( i\sqrt {\omega+1}-\sqrt {\omega-1} \right) }{2\omega}}
\\ \noalign{\medskip}1
                             \end {array} \right]
                        ,\ \ P\mathbf{y}_0^*=
                        \left[ \begin {array}{c} 
                                 {\frac {\a \, \left( \sqrt {\omega
-1}-i \right) }{2\omega}}\\ \noalign{\medskip}1\\ \noalign{\medskip}{
\frac {-i\a \, \left( \sqrt {\omega+1}+1 \right) }{2\omega}} 
\end {array} \right] ;\ \ 
 P \mathbf{f}_0=
     \left[ \begin {array}{c} {\frac {-i\sqrt {\omega-1}+\sqrt {
\omega+1}}{2 \a }}\\ \noalign{\medskip}{\frac {\a }{\omega}}
\\ \noalign{\medskip}{\frac {i\sqrt {\omega-1}-\sqrt {\omega+1}}{
                           2\a }}\end {array} \right]
\end{equation}
and
\begin{equation}
  \label{eq:S0}
  PS_0P= \left[ \begin {array}{ccc} 0&{\frac {2\,i \left( i\sqrt {\omega-1}
\sqrt {\omega+1}-1 \right) }{{\a }^{3}}}&0\\
\,{\frac { \left( 2\,i\sqrt {\omega-1}+\omega-2 \right) \a }{4{\omega}^{2}\sqrt 
{\omega-1}}}&0&{\frac {-i\a \, \left( 2\,\sqrt {
\omega+1}+\omega+2 \right) }{4{\omega}^{2}\sqrt {\omega+1}}}
\\0&{\frac {-2\,i \left( i\sqrt {\omega-1}\sqrt {
\omega+1}-1 \right) }{{\a }^{3}}}&0\end {array} \right] 
\end{equation}
Using \eqref{residue} we get  $\lambda=\a/2(1+o_a)$. The contribution from the pole close to $p_0$ to the inverse Laplace transform of $\Phi$ is $\tfrac{\a}{2}e^{p_0 t}$. The rest is a straightforward calculation based on \eqref{eq:theta1T}.
\subsection{Proof of the transseries representation}\label{Pfseries}
\subsubsection{Proof of Theorem\,\ref{Theo2}}\label{pf2} We now go back from the discretized quantity $\gg$ to the continuous one, $g$, using  \eqref{notaq}. Then by \eqref{star} and \eqref{Phig} we have
\begin{equation}\label{fiLg}
\phi(t)=(2\pi i)^{-1}\int_{c-i\infty}^{c+i\infty} g(ip) e^{pt} dp
\end{equation}
 We can choose $c=0$ since $g$ is $L^1$. $\gg$ solves \eqref{Kdesigmaop} and,  by Proposition\,\ref{Prop2} and Theorem\,\ref{T-npoles}, it is analytic in $\sigma$ except for an array of poles and of branch points, therefore the same holds for $g$.
We proceed as described in Remark \ref{cuts} and deform the contour of the inverse Laplace transform in \eqref{fiLg} (the Fourier transform of $g$) 
into a sum of Hankel contours. In the process we collect the residues $R_n e^{p_n t}$.  

\z More precisely, we obtain from \eqref{fiLg}
\begin{multline}\label{serphi}
\phi(t)=\frac {1}{2\pi}\int_{ic-\infty}^{ic+\infty} g(q)e^{-iqt}\, dq=i\sum_n{\rm Res}[g(q)e^{-iqt},q_n]\\
+\frac{1}{2\pi i}\sum_n e^{-i\beta_n t}\int_0^\infty \left[ g(-i\tau+\beta_n+0)-g(-i\tau+\beta_n-0)\right]e^{-\tau t}\, d\tau\\
=\sum_nR_ne^{p_nt}+\frac{1}{2\pi i}\sum_n e^{-i\beta_n t}\int_0^\infty \Delta g(-i\tau+\beta_n)e^{-\tau t}\, d\tau
\end{multline}

We only need to check the convergence of the sum of the residues, and of the integrals, which is rather straightforward but for completeness we outline  below. 

The sum of the residues converges factorially fast, by \eqref{eq:43}. We claim that $\sup_{n\in\ZZ,\tau>0}|(n^2+1)g(n\omega\pm 0 -i\tau|<\infty$ ensuring the convergence of the sum of the branch-cut contributions. 

 Fix some $n_0>0$ (a similar argument applies for $n_0<0$) so that $\sup_{n\in\ZZ,\tau>0}|h(n\omega\pm 0 -i\tau)|<1/2$. Note that $g(n_0\omega\pm 0 -i\tau)$ are real-analytic in $\tau$ and vanish in the limit $\tau\to \infty$. Thus $\sup_{\tau>0}|g(n_0\omega\pm 0 -i\tau)|<C$ for some $C$. By analytic continuation, $g(n\omega\pm 0 -i\tau)$ are given by $(I-K_0)^{-1}f$ and in particular are in $\ell^2$ for all $\tau\Ge 0$, and satisfy the recurrence \eqref{Kdesigma}. With $P$ the projection on $\ell^2(n_0+\NN)$ (the $\ell^2$ sequences indexed starting with $n_0$), $P g$  satisfies the equation 
 \begin{equation}
   \label{eq:ctr}
   Pg=PKPg+Pf+E
 \end{equation}
where $E$ is the vector whose only nonzero component is $E_{n_0}=-\alpha h_{n_0}g_{n_0}$. We consider \eqref{eq:ctr} in the space $\ell^\infty_2:=\{x:\|x\|=\sup_{n>n_0} n^2 |x_n| <\infty\}$. Since $\|PK\|<1/2$, $\|f\|<C_1,\|E\|<C$ for some $C_1$, $(I-PK)^{-1}(f+E)$ is the unique  $\ell^\infty_2$ solution of \eqref{eq:ctr}. Since it is obviously in $\ell^2$, it coincides with $g$. 

Finally note that from \eqref{eq:theta1} we obtain that $\theta(t)=-2i\int_t^\infty \phi(s)ds$ (since $\theta(t)\to 0$ as $t\to\infty$). The series for $\theta$ and $\Theta$ are obtained by integration of \eqref{serphi} completing the proof. 

\subsection{Proof of Theorem\,\ref{T4} (iii) and (iv)}\label{PfT434}
Formula \eqref{eq:ThetaSmall a} is obtained from \eqref{eq:341} and the Neumann series for $(I-K_0)^{-1}$, noting that $\Phi=(I-K_0)^{-1}f$.

\subsection{Proof of Theorem\,\ref{T4}\,(iii), (iv): Branch cut contributions}\label{bccon}

For small $\a$, these are most easily found from the Neumann series 
$\gg=(I-\a\tilde{K}_0(\sigma,\omega))^{-1}\ff=\ff+\a\tilde{K}_0\ff+O(\a^2)$
noting that $g_0$ is analytic and $g_n$ for $n\ne\pm1$ has square root branch points in the order $\a^3$ of the expansion (recall that $\ff$ is a multiple of $\a$). Since $g_{\pm 1}=\mp h_0f_0+$analytic$+O(\a^3)$ we obtain, for small $|z|$,
\begin{equation}
g_{\pm 1}(1+z)=\pm\frac{\a^2\omega}{2}\frac 1{\sqrt{z}-i}\frac{\omega}{(1+z)^2-\omega^2} +O(\a^3)+f(z)
\end{equation}
where $f(z)$ is a function analytic at $0$.

The rest follows from Theorem\,\ref{Theo2} using \eqref{serphi} and Theorem\,\ref{T4}\,(ii).

 \section{Appendix}
 
We sketch the argument in the proof of Proposition\,\ref{Prop2} with the notation of this paper.
\begin{proof}
Consider a solution $\gg\in\ell^2$ of \eqref{eq:rec1h}, the homogeneous part of \eqref{eq:rec201}.  Note that $h_0=h_0(\sigma)$ has a pole at $\sigma=0$, hence, from \eqref{eq:not}, $b_1(0)=0$.
Then taking $n=1$ in  the homogeneous part of \eqref{eq:rec201} we see that $g_0(0)=0$. 

Let $g_n h_n=u_n$; clearly $\mathbf{u}\in\ell^2$. Taking now $n=0$ in the recurence, we see that $u_1(0)=u_{-1}(0)$. Reversing now the steps that led to  \eqref{eq:rec201}, we see that the homogeneous equation is equivalent to
\begin{equation}
  \label{eq:hom3}
  h_n^{-1}\,u_n= \a u_{n+1}-\a u_{n-1}
\end{equation}
 where the left side is zero if $n=0,\sigma=0$. 
 Taking scalar product with $\bfu$ in \eqref{eq:hom3} we get
$$\sum_{n\in\ZZ}h_n^{-1}|u_n|^2= \a\sum_{n\in\ZZ}u_{n+1}\overline{u}_n- \a\sum_{n\in\ZZ}u_{n-1}\overline{u}_n=2i \a \Im \sum_{n\in\ZZ}u_{n+1}\overline{u}_n$$
Note that $h_n\in i\RR$ for $n<1$ and $\Re h_n>0$ if $n\ge 1$ and thus, unless all $u_n$ for $n>0$ vanish, the left side above has a positive real part. Then, \eqref{eq:hom3} implies $u_n=0$ for all $n\in\ZZ$. 
\end{proof}
\section{Acknowledgments}
OC  was partially supported by the NSF-DMS grant 1515755 and JLL by the AFOSR grant FA9550-16-1-0037. We thank David Huse for very useful discussions. JLL thanks the Systems Biology division of the Institute for Advanced Study for hospitality during part of this work.

\end{document}